\documentclass[11pt]{article}

\usepackage[margin=1in]{geometry}
\usepackage{graphicx}
\usepackage{multirow}
\usepackage{amsmath,amssymb,amsfonts}
\usepackage{amsthm}
\usepackage{mathtools}
\usepackage{mathrsfs}
\usepackage[title]{appendix}
\usepackage{xcolor}
\usepackage{textcomp}
\usepackage{booktabs}
\usepackage{cite}
\usepackage[colorlinks=true,linkcolor=blue,citecolor=blue,urlcolor=blue]{hyperref}

\providecommand{\bmhead}[1]{\medskip\par\noindent\textbf{#1}\par\nobreak\smallskip}

\theoremstyle{plain}
\newtheorem{theorem}{Theorem}[section]
\newtheorem{proposition}[theorem]{Proposition}
\newtheorem{lemma}[theorem]{Lemma}
\newtheorem{corollary}[theorem]{Corollary}
\newtheorem{conjecture}[theorem]{Conjecture}

\theoremstyle{remark}
\newtheorem{example}[theorem]{Example}
\newtheorem{remark}[theorem]{Remark}

\theoremstyle{definition}
\newtheorem{definition}[theorem]{Definition}

\newcommand{\R}{\mathbb{R}}
\newcommand{\Q}{\mathbb{Q}}
\providecommand{\N}{}\renewcommand{\N}{[n]}
\newcommand{\spn}{\mathrm{span}}

\title{What Semivalues Cannot See: The Information Content of Anonymous
Marginal Values}

\author{Matthew Fried\thanks{SUNY Farmingdale. Email:
\texttt{friedm1@farmingdale.edu}}}

\date{}

\begin{document}

\maketitle

\begin{abstract}
The semivalue family shares a common kernel: games invisible to every anonymous
marginal value at once, nonzero from four players (Kleinberg and Weiss, 1985; Amer,
Derks and Gim\'enez, 2003). Crisman and Orrison (2015) ask what useful structure this
kernel carries; this paper gives a concrete answer. In Harsanyi-dividend coordinates
the joint information of all semivalues is exactly each player's total synergy at each
coalition size, so the kernel is synergy arranged in closed circuits. We prove:
order-$\le d$ mixed-difference audits recover exactly the degree-$\le d$
dividend-slice harmonics, with closed-form dimension at every rung; nonzero blind
games fail superadditivity, monotonicity, and core existence, yet distinct convex
games with identical values under every semivalue exist from four players, with exact
perturbation thresholds; the positive weighted Shapley family attains full information
$2^n-1$, so anonymity is the binding axiom within the marginal framework; and a
coalition of size $c$ defeats every audit of order $\le d$ precisely when
$c\ge2d+2$, within the convex class for small perturbations. An exhaustive census at
$n=5$ exhibits non-isomorphic voting rules with identical values under every semivalue
power index; no weighted game participates in any collision, prompting a
swing-rigidity conjecture. Measured against the theory, classical cooperative games
sit at $0.90$ to $1.00$ visibility to the family versus $0.089$ for a random game.
\end{abstract}

\medskip
\noindent\textbf{Keywords:} semivalues, Shapley value, Harsanyi dividends, interaction
indices, simple games, power indices \\
\textbf{MSC Classification:} 91A12 \qquad \textbf{JEL Classification:} C71
\medskip

\section{Introduction}

Begin with a puzzle every practitioner knows. The Shapley value performs beautifully on
the classical applications (airport landing fees, bankruptcy division, where it
coincides with O'Neill's random-arrival rule \cite{ONeill}, and cost sharing on
networks), yet feels inadequate the
moment complementarities take center stage: team formation, data markets, feature
attribution. The standard explanations are heuristic (computational cost, axiom debates).
This paper gives an exact one. Every anonymous semivalue is a system of \emph{totals}; a
game is a system of \emph{arrangements}; we compute precisely which arrangements totals
can express, measure the classical games against that subspace, and find most of them above
$0.90$ visibility versus $0.089$ for a random game. This suggests the methodology
succeeded where it was applied because where it was applied was unrepresentative,
and no internal signal could ever have revealed this, because the missing part is, by
the structure of the methods themselves, the part that produces no signal.

The axiomatic program of value theory, initiated by Shapley~\cite{Shapley53} and
systematized by Dubey, Neyman, and Weber~\cite{DNW} and Weber~\cite{Weber}, classifies
solution concepts by the axioms they satisfy. This paper proposes and executes a
complementary classification: by the \emph{information} they extract. Each value is a
linear functional of the game; each \emph{family} of values spans a subspace of the dual
of game space; the dimension and the identity of that subspace are computable invariants
of the family. We compute them for the classical families and find the results sharp
enough to reorganize how one thinks about the axioms themselves.

\textbf{Related work, and the division of labor.} The structural core of this story
is classical, and we use it rather than claim it. Kleinberg--Weiss decomposed game
space under $S_n$ and parametrized all symmetric linear values
\cite{KWalg,KWnull,KWortho,KWmarginal}; Hern\'andez-Lamoneda--Ju\'arez--%
S\'anchez-S\'anchez \cite{HLJS} gave the modern dissection, from which it is a
Schur-lemma consequence that symmetric linear values factor through the
trivial-plus-standard constituents of each slice; and Amer--Derks--Gim\'enez
\cite{ADG} constructed the common kernel of the semivalue family, the games
\emph{inseparable by semivalues}, proving it nonzero exactly from $n=4$ and
exhibiting a spanning family of $\{-1,0,1\}$-valued shuffle games. The visible
dimension $n^2-n+1$ and the kernel dimension $2^n-n^2+n-2$ follow from either line.
The survey of Crisman--Orrison presents this literature and, crediting the question's
origin to the Kleinberg--Weiss papers, poses explicitly whether the common kernel of
all linear symmetric solution concepts has \emph{useful structure}, and what the
generalized solution concepts built on the remaining constituents, with ``payoffs for
pairs, triples, and so on,'' would mean \cite[\S4]{CO}.

\textbf{This paper addresses a concrete form of the posed question.} The contributions, in order of
appearance, with the division from the foundations stated explicitly:
\begin{enumerate}
\item \emph{The kernel in Harsanyi coordinates} (Theorem~\ref{thm:harsanyi}). The
joint information of all semivalues is exactly the per-player, per-size dividend
loadings $L_i(t)$, so the kernel is synergy arranged in closed circuits. The isotypic
statement is known \cite{KWalg,ADG,HLJS}; the dividend-loading coordinates and the
unitriangular weight map are new, and everything below is stated in them.
\item \emph{The price list} (Theorem~\ref{thm:hierarchy}). The order-$\le d$
interaction functionals of Grabisch--Roubens, precisely the ``generalized solutions
for pairs and triples'' of \cite{CO}, recover exactly the degree-$\le d$
dividend-slice harmonics, with closed-form dimension at every rung and full recovery
at $d=\lfloor n/2\rfloor$.
\item \emph{Economic structure of the kernel} (Theorems~\ref{thm:cones}
and~\ref{thm:convexfiber}). Nonzero blind games fail superadditivity, monotonicity,
and core nonemptiness, yet distinct convex games with identical values under every
semivalue exist from four players, with exact perturbation thresholds: regularity
excludes purely blind games but does not restore identification.
\item \emph{The cause} (Theorem~\ref{thm:weighted}). The family of all positive
weighted Shapley values attains full information $2^n-1$, so within the
marginal-contribution framework anonymity is the binding axiom at the family level.
\item \emph{The exploitation} (Theorem~\ref{thm:manipulation}). A coalition of size
$c$ has a $(2^c-c^2+c-2)$-dimensional space of restructurings invisible to every
semivalue payment; the full order-$\le d$ audit family is immune to coalitions of
size $\le2d+1$ and manipulable from $2d+2$, within the convex class for small
perturbations (Corollary~\ref{cor:smallball}).
\item \emph{The evidence} (\S\ref{sec:empirics}). Under
Definition~\ref{def:visibility}, the classical canon measures $0.90$ to $1.00$
visibility against a generic baseline of $0.089$, which converts the kernel from a
curiosity into an account of the Shapley value's empirical track record.
\item \emph{The census} (\S\ref{sec:parliaments}). Exhaustive enumeration of all
$7{,}579$ monotone simple games at $n=5$ exhibits non-isomorphic voting rules with
identical values under every semivalue power index; no weighted game participates in
any collision, prompting a swing-rigidity conjecture with a stated search protocol.
\end{enumerate}

\subsection{Conventions and verification labels}\label{sec:conventions}
Games are $v\colon2^{\N}\to\R$ with $v(\varnothing)=0$; the game space $V_n$ has
dimension $2^n-1$. Computational support is labeled \textbf{exact} (rational arithmetic),
\textbf{numerical rank} (integer matrix ranks computed in floating point), or \textbf{exhaustive}
(complete enumeration of a finite class). Every theorem below carries a complete proof;
computations support but never substitute for them, with one exception
(Conjecture~\ref{conj:rigidity}) labeled as such.

\section{Preliminaries}\label{sec:prelim}

For $S\subseteq\N$ write $s=|S|$. The \emph{Harsanyi dividends}~\cite{Harsanyi} of $v$
are the M\"obius coefficients $d_T=\sum_{R\subseteq T}(-1)^{|T|-|R|}v(R)$, equivalently
the unique coefficients with $v(S)=\sum_{T\subseteq S}d_T$; the map $v\mapsto d$ is a
linear isomorphism of $V_n$ onto $\{d:d_\varnothing=0\}$.

A \emph{semivalue}~\cite{DNW} with weight system $p=(p_k)_{k=0}^{n-1}$, $p_k\ge0$,
$\sum_k\binom{n-1}kp_k=1$, assigns player $i$ the payoff
$\psi_i^p(v)=\sum_{S\not\ni i}p_{|S|}\bigl[v(S\cup i)-v(S)\bigr]$. By
Dubey--Neyman--Weber these are exactly the values satisfying linearity, anonymity,
positivity, and the dummy axiom. The Shapley value has $p_k=k!(n-k-1)!/n!$; the Banzhaf
value \cite{Banzhaf} has $p_k=2^{-(n-1)}$.

The \emph{joint information} of a family $\mathcal F$ of values is the linear span of the
functionals $\{v\mapsto\phi_i(v):\phi\in\mathcal F,\,i\in\N\}$ in $V_n^*$; its
\emph{kernel} is the common annihilated subspace. For the semivalue family the admissible
weight systems form a simplex not containing the origin, whose linear span is all of
$\R^n$; hence the joint information of all semivalues equals
$\spn\{m_i(k)\}$ where $m_i(k)(v)=\sum_{|S|=k,\,S\not\ni i}[v(S\cup i)-v(S)]$.

For $1\le t\le n$ and $i\in\N$ define the \emph{dividend loadings}
\[
L_i(t)(v)\;=\;\sum_{\substack{|T|=t,\ i\in T}}d_T .
\]

\section{The Harsanyi characterization (foundations, in dividend coordinates)}\label{sec:harsanyi}

The goal of this section is to answer, in the most concrete coordinates available,
the question: if one could observe every semivalue payment at once, what function of
the game has one observed? The answer, in isotypic language, is known: symmetric
linear values see the trivial-plus-standard part \cite{KWalg,HLJS}, and the common
kernel of the semivalue family was constructed in \cite{ADG}. What is new here is the
coordinate system in which the answer becomes usable: the dividend loadings $L_i(t)$
and the unitriangular weight map $p\mapsto\gamma_p$, which turn ``the
trivial-plus-standard part'' into the statement \emph{each player's total synergy at
each size, and nothing else}. Every theorem in the paper is stated and proved in
these coordinates; the short self-contained proof keeps the paper independent of the
cited machinery.

\begin{theorem}[known characterization, new coordinates]\label{thm:harsanyi}
The joint information of the family of all semivalues equals
$\spn\{L_i(t):i\in\N,\ 1\le t\le n\}$. Explicitly, every semivalue expands as
\begin{equation}\label{eq:semidiv}
\psi_i^p \;=\; \sum_{t=1}^{n}\gamma_p(t)\,L_i(t),
\qquad
\gamma_p(t)\;=\;\sum_{m=t-1}^{n-1}\binom{n-t}{\,m-t+1\,}\,p_m ,
\end{equation}
and the map $p\mapsto\gamma_p$ is a linear bijection of $\R^n$.
Consequently a game is annihilated by every semivalue if and only if
$L_i(t)(v)=0$ for all $i$ and $t$: \emph{at every synergy size, every player's net
dividend is zero}.
\end{theorem}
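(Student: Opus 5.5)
The plan is to work directly in dividend coordinates. Substituting $v(S)=\sum_{T\subseteq S}d_T$ into a marginal contribution gives $v(S\cup i)-v(S)=\sum_{T\subseteq S}d_{T\cup i}$, a sum over all $T$ with $i\in T\subseteq S\cup i$. Hence
\[
\psi_i^p(v)=\sum_{S\not\ni i}p_{|S|}\sum_{T\ni i,\ T\subseteq S\cup i}d_T
=\sum_{T\ni i}d_T\sum_{S\supseteq T\setminus i,\ S\not\ni i}p_{|S|}.
\]
For fixed $T$ with $|T|=t$, the sets $S$ in the inner sum are $T\setminus i$ together with $m-t+1$ further elements chosen from the $n-t$ players outside $T$. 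The number of such $S$ of size $m$ is $\binom{n-t}{m-t+1}$. So the inner coefficient depends only on $t$ and equals $\gamma_p(t)$ as in \eqref{eq:semidiv}. Grouping the $T$ by size then gives $\psi_i^p=\sum_t\gamma_p(t)L_i(t)$. This step is routine bookkeeping.

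Next I show that $p\mapsto\gamma_p$ is a bijection on $\R^n$. I index $p$ by $m=0,\dots,n-1$ and $\gamma$ by $t=1,\dots,n$, and pair $t$ with $m=t-1$. Then $\gamma_p(t)$ involves only $p_m$ with $m\ge t-1$, and the coefficient of $p_{t-1}$ is $\binom{n-t}{0}=1$. The matrix is therefore upper unitriangular, and hence invertible.

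The span statement follows from this. The preliminaries already reduce the joint information to the span over all $p\in\R^n$ (the simplex spans $\R^n$), equivalently $\spn\{m_i(k)\}$. By bijectivity, as $p$ ranges over $\R^n$, $\gamma_p$ ranges over all of $\R^n$. Taking $\gamma_p=e_t$ shows that each $L_i(t)$ lies in the span. Conversely, every $\psi_i^p$ is a combination of the $L_i(t)$. The two spans are therefore equal.

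The kernel statement is immediate once the spans agree: a game is annihilated by all semivalues iff it is annihilated by all $L_i(t)$. The only point needing care is the linear-span reduction, namely that functionals from arbitrary real $p$ lie in the span of functionals from admissible $p$. This holds because the normalization hyperplane misses the origin, so its points span $\R^n$ and linearity in $p$ does the rest. I expect no real obstacle here; the main risk is an off-by-one error in the binomial index, which I would check at $t=1$ (where $\gamma_p(1)=\sum_m\binom{n-1}{m}p_m=1$, matching efficiency-type normalization) and at $t=n$ (where $\gamma_p(n)=p_{n-1}$).
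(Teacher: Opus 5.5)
Your proposal is correct and matches the paper's proof step for step: expand each marginal contribution in dividends, exchange the sums, and count the sets $S$ of size $m$ with $T\setminus i\subseteq S\subseteq\N\setminus i$ as $\binom{n-t}{m-t+1}$ (the paper indexes by $R=T\setminus i$, which gives the same count). Then note that $p\mapsto\gamma_p$ is unitriangular and use that the admissible simplex linearly spans $\R^n$; the only wording to tighten is that the \emph{admissible} (nonnegative) weight systems, e.g.\ the vertices $e_k/\binom{n-1}{k}$, must span $\R^n$, not merely the normalization hyperplane, which is immediate.
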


\begin{proof}
For $S\not\ni i$, splitting each $T\subseteq S\cup i$ according to whether $i\in T$,
\[
v(S\cup i)-v(S)\;=\;\sum_{\substack{T\subseteq S\cup i\\ i\in T}}d_T
\;=\;\sum_{R\subseteq S}d_{R\cup i}.
\]
Therefore
\[
\psi_i^p(v)=\sum_{S\not\ni i}p_{|S|}\sum_{R\subseteq S}d_{R\cup i}
=\sum_{R\not\ni i}d_{R\cup i}\sum_{\substack{S\supseteq R\\ S\not\ni i}}p_{|S|}
=\sum_{R\not\ni i}d_{R\cup i}\sum_{m=|R|}^{n-1}\binom{n-1-|R|}{\,m-|R|\,}p_m ,
\]
since the sets $S$ with $R\subseteq S\subseteq\N\setminus i$, $|S|=m$, number
$\binom{n-1-|R|}{m-|R|}$. Grouping by $t=|R|+1$ and summing $d_{R\cup i}$ over
$|R|=t-1$ gives~\eqref{eq:semidiv} with $\gamma_p(t)$ as stated (note
$n-1-(t-1)=n-t$ and $m-(t-1)=m-t+1$).

For the bijectivity of $p\mapsto\gamma_p$: order coordinates so that $\gamma(t)$
corresponds to $p_{t-1}$. The coefficient of $p_{t-1}$ in $\gamma(t)$ is
$\binom{n-t}{0}=1$, and the coefficient of $p_m$ for $m<t-1$ is
$\binom{n-t}{m-t+1}=0$ (negative lower index). The matrix is therefore unitriangular,
hence invertible. Since the admissible weight simplex linearly spans $\R^n$
(\S\ref{sec:prelim}), the vectors $\gamma_p$ over admissible $p$ linearly span $\R^n$,
so $\spn\{\psi_i^p\}=\spn\{L_i(t)\}$. The kernel description is immediate.
\end{proof}

\begin{example}\label{ex:gabcd}
For distinct $a,b,c,d$, the multilinear game
$g_{abcd}(S)=(\mathbf 1_a(S)-\mathbf 1_d(S))(\mathbf 1_b(S)-\mathbf 1_c(S))$
has dividends exactly $d_{ab}=d_{cd}=+1$, $d_{ac}=d_{bd}=-1$ (multilinear coefficients
\emph{are} dividends), all of size $2$; every loading $L_i(2)$ cancels by inspection, so
$g_{abcd}$ is invisible to every semivalue, while its pairwise structure is maximal.
(\textbf{exact}: dividend support and $L_i(t)=0$ verified in rational arithmetic. A care
point: the restriction of $g_{abcd}$ to the middle cardinality level is also in the
kernel but is \emph{not} dividend-pure; the multilinear $g_{abcd}$ is the clean object.)
\end{example}

\begin{remark}[why four players, and not three]\label{rem:whyfour}
Blindness requires every player's synergy ledger to balance at every size
simultaneously. With three players the pair ledger reads $d_{12}+d_{13}=0$,
$d_{12}+d_{23}=0$, $d_{13}+d_{23}=0$: three equations of full rank in three unknowns,
so nothing cancels, and the size-$1$ and size-$3$ slices are directly visible; hence
$\mathcal N_3=0$ (Theorem~\ref{thm:dimension}). With four players there are six pair
dividends against four balance equations, and the two-dimensional slack is exactly the
closed circuit of Example~\ref{ex:gabcd}: synergy owed around a cycle, and the
shortest alternating cycle uses four people. Every $n\ge4$ admits such circuits, in
dimension $2^n-n^2+n-2$; Proposition~\ref{prop:kernelbasis} shows that all blindness,
at every $n$ and every size, is built from them.
\end{remark}

\begin{proposition}[a spanning set of arranged synergy]\label{prop:kernelbasis}
For $2\le t\le n-2$ and disjoint $R$, $\{a,b,c,d\}$ with $|R|=t-2$, the
\emph{four-cycle move} $e_{R\cup ab}+e_{R\cup cd}-e_{R\cup ac}-e_{R\cup bd}$ (in
dividend coordinates, on the size-$t$ slice) is blind, and these moves span the entire
kernel $\mathcal N_n$. (The slices $t\in\{1,\,n-1,\,n\}$ contribute nothing: there
the loading rank of Lemma~\ref{lem:incidence} equals the slice dimension, so their
kernel is zero.)
\end{proposition}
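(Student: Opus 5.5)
Since the loadings $L_i(t)$ only involve dividends of size exactly $t$, the kernel $\mathcal N_n$ splits as a direct sum over slices: a dividend vector $d$ is blind iff, for each $t$, its size-$t$ part $d^{(t)}$ satisfies $\sum_{|T|=t,\,T\ni i}d_T=0$ for all $i$ (Theorem~\ref{thm:harsanyi}). So I will work slice by slice. On the size-$t$ slice the blindness conditions say $d^{(t)}\in\ker M_t$, where $M_t$ is the $n\times\binom nt$ vertex–$t$-set incidence matrix. The first step is blindness of the four-cycle move, which is direct. For $i\in R$ all four sets contain $i$, contributing $1+1-1-1=0$. For $i=a$ the sets $R\cup ab$ and $R\cup ac$ contribute $1-1$, and $b,c,d$ are symmetric. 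For $i\notin R\cup abcd$ nothing contributes. The sets are distinct and of size $t$ because $R$ is disjoint from $\{a,b,c,d\}$.

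For spanning, I will use a dimension count plus a reduction argument. Lemma~\ref{lem:incidence} (the loading rank) gives $\operatorname{rank}M_t=n$ for $1\le t\le n-1$ except the degenerate cases, so $\dim\ker M_t=\binom nt-n$ for $2\le t\le n-2$, and $0$ at $t\in\{1,n-1,n\}$. Summing gives $2^n-1-n-n-1-(n-2)\cdot n$ adjusted; it should reproduce $2^n-n^2+n-2$, which is a consistency check. Rather than compute the rank of the moves directly, I will show that every $x\in\ker M_t$ reduces to $0$ modulo the span $W_t$ of four-cycle moves.

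The reduction is a normal-form argument. Fix the $n$ "anchor" $t$-sets. Modulo $W_t$, I will show every $t$-set is congruent to a combination of anchors. Since $M_t$ restricted to the anchors is injective, a kernel element congruent to an anchor combination must have zero anchor combination, since $M_t$ annihilates $W_t$. This yields $x\in W_t$.

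For the anchors I plan to use an exchange lemma: two $t$-sets $A,B$ with $|A\triangle B|=2$ can be related via moves. Concretely, $e_{R\cup ab}-e_{R\cup ac}\equiv e_{R\cup bd}-e_{R\cup cd}$, so the difference "swap $b\to c$ in the presence of $a$" does not depend on the partner $a$ (replaced by $d$). Hence the swap-difference $\delta_{b\to c}$ is well defined modulo $W_t$ independently of the remaining $t-1$ elements, provided a spare fourth element exists, which $t\le n-2$ guarantees. Then every $e_T\equiv e_{T_0}+\sum(\text{swap differences})$, and swap differences satisfy $\delta_{b\to c}=\delta_{b\to 1}-\delta_{c\to 1}$. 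So the quotient $\R^{\binom nt}/W_t$ is spanned by $e_{T_0}$ and $\{\delta_{j\to1}\}_{j\ne1}$, at most $n$ vectors. Thus $\dim(\R^{\binom nt}/W_t)\le n=\operatorname{rank}M_t$. Combined with $W_t\subseteq\ker M_t$, this gives $W_t=\ker M_t$.

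The main obstacle is making the well-definedness of $\delta_{b\to c}$ rigorous. Its independence from the whole context $R\cup a$, not just from $a$, requires chaining single-element context changes. Each change needs an element outside the current sets, which is where $t\le n-2$ enters and where the small edge cases (e.g.\ $t=2$, $n=4$) must be checked by hand.
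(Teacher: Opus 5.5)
Your argument is correct and is essentially the paper's proof run in the quotient $\R^{\binom nt}/W_t$ rather than in the annihilator $W_t^{\perp}$. Your context-independent swap differences $\delta_{b\to c}$, the reduction $\delta_{b\to c}=\delta_{b\to1}-\delta_{c\to1}$, and the path from $T_0$ are dual to the paper's Steps 1--3 for $\Delta(x,y)$. The only difference is the closing step: you finish with a dimension count that needs $\operatorname{rank}M_t=n$ from Lemma~\ref{lem:incidence}, whereas the paper builds an explicit potential $\mu$ that places $W_t^{\perp}$ inside the row space directly.

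The edge cases you flag need no separate check: contexts are $(t-1)$-subsets of the $(n-2)$-set $[n]\setminus\{b,c\}$, and a single-element context change uses only $R$ and the quadruple $\{a,b,c,d\}$, whose existence is exactly $t\le n-2$. Also, in your side count the number of slices $2\le t\le n-2$ is $n-3$, so the subtracted term should be $(n-3)n$, not $(n-2)n$.
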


\begin{proof}
Blindness: each player's incidence count in the move is $0$ ($a$: $+1-1$; likewise
$b,c,d$; elements of $R$: $+2-2$), so all loadings vanish. Spanning: fix
$2\le t\le n-2$; the kernel on the size-$t$ slice is the orthogonal complement of the
row space of the incidence matrix $W$ of Lemma~\ref{lem:incidence}, so, the moves
lying in the kernel, it suffices to show that any vector $z$ on the slice orthogonal
to every move lies in the row space of $W$, that is, $z_T=\sum_{i\in T}\mu_i$ for
some $\mu$. Orthogonality to the moves says
$z_{R\cup ab}+z_{R\cup cd}=z_{R\cup ac}+z_{R\cup bd}$ for all disjoint
$R,\{a,b,c,d\}$ ($*$). For $x\in T$, $y\notin T$ put
$\Delta(x,y;T)=z_{(T\setminus x)\cup y}-z_T$.

\emph{Step 1: $\Delta$ does not depend on $T$.} If $T'=(T\setminus a)\cup b$ with
$a,b\notin\{x,y\}$, then with $R=T\setminus\{x,a\}$ the instance of ($*$) on the
quadruple $\{a,b,x,y\}$ reads $z_{R\cup ay}+z_{R\cup xb}=z_{R\cup by}+z_{R\cup xa}$,
which is exactly $\Delta(x,y;T)=\Delta(x,y;T')$. Any two sets containing $x$ and
avoiding $y$ are connected by such swaps inside $[n]\setminus\{x,y\}$, since the
$(t-1)$-subsets of an $(n-2)$-set form a connected Johnson graph for $t\le n-2$.
Write $\Delta(x,y)$. (At the boundary $t=n-2$ each $R$ leaves exactly four elements,
so exactly one quadruple, and both conditions hold with equality.)

\emph{Step 2: cocycle.} For distinct $x,y,w$ choose $T$ containing $x$ and avoiding
$y$ and $w$ (possible since $t\le n-2$); performing the swaps $x\to y$ then $y\to w$
on $T$ and telescoping gives $\Delta(x,y)+\Delta(y,w)=\Delta(x,w)$, and
$\Delta(x,y)=-\Delta(y,x)$. Fixing $x_0$ and setting $\nu(x)=\Delta(x_0,x)$,
$\nu(x_0)=0$, every increment is $\Delta(x,y)=\nu(y)-\nu(x)$.

\emph{Step 3: potential.} Fix a base set $T_0$. Any $T$ is reached from $T_0$ by
single-element swaps, each changing $z$ by $\nu(\text{in})-\nu(\text{out})$; by
Step~2 the total is path-independent, so
$z_T=z_{T_0}+\sum_{i\in T}\nu(i)-\sum_{i\in T_0}\nu(i)$. Setting
$\mu_i=\nu(i)+\bigl(z_{T_0}-\sum_{j\in T_0}\nu(j)\bigr)/t$ gives
$z_T=\sum_{i\in T}\mu_i$, as required. (\textbf{numerical rank}: the moves attain the kernel
dimension $\binom nt-n$ on every slice for $n=5,6$, all $t$.)
\end{proof}

\begin{remark}
Amer, Derks and Gim\'enez \cite{ADG} exhibit a spanning family of shuffle games for the
same kernel. The four-cycle moves are proved to span independently above; we have not
verified the change of basis relating the two families.
\end{remark}

\section{Dimension of the blind spot}\label{sec:dimension}

The goal of this section is to count what the entire family sees.
Theorem~\ref{thm:harsanyi} reduces the count to linear algebra: the visible
functionals $L_i(t)$ live on disjoint dividend slices, so it suffices to compute, on
each slice, the rank of the player-versus-coalition incidence pattern. The lemma below
does this by a Gram-matrix computation rather than by the shortest available argument,
deliberately: the computation is the base case ($s=1$) of Gottlieb's inclusion-matrix
theorem, the one external tool the hierarchy theorem of the next section relies on,
and proving the base case directly keeps the paper self-contained where it can be.

\begin{lemma}\label{lem:incidence}
Fix $1\le t\le n-1$ and let $W$ be the $n\times\binom nt$ incidence matrix
$W_{i,T}=\mathbf 1[i\in T]$ over the size-$t$ sets. Then $\operatorname{rank}W=n$.
For $t=n$ the loadings $L_i(n)$ all equal $d_{\N}$ and have joint rank $1$.
\end{lemma}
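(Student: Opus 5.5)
Following the paper's stated intent, I will compute the Gram matrix $G=WW^{\top}$, an $n\times n$ matrix, and show it is nonsingular; then $\operatorname{rank}W=\operatorname{rank}G=n$. The entry $G_{ij}$ counts the size-$t$ sets containing both $i$ and $j$. Hence
\[
G_{ii}=\binom{n-1}{t-1},\qquad G_{ij}=\binom{n-2}{t-2}\quad(i\ne j),
\]
with the convention $\binom{n-2}{-1}=0$ when $t=1$. So $G=\alpha I+\beta J$, where $J$ is the all-ones matrix, $\beta=\binom{n-2}{t-2}$, and $\alpha=\binom{n-1}{t-1}-\binom{n-2}{t-2}=\binom{n-2}{t-1}$ by Pascal's rule.

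The eigenvalues of $\alpha I+\beta J$ are immediate. On the all-ones vector the eigenvalue is $\alpha+n\beta$, and on its orthogonal complement (dimension $n-1$) it is $\alpha$. It therefore suffices to check that both are nonzero.
\begin{itemize}
\item For $1\le t\le n-1$ we have $0\le t-1\le n-2$, so $\alpha=\binom{n-2}{t-1}>0$. This is exactly where the hypothesis $t\le n-1$ enters: at $t=n$, $\alpha=\binom{n-2}{n-1}=0$ and the matrix collapses to $\beta J$.
\item Since $\beta\ge0$, also $\alpha+n\beta>0$.
\end{itemize}
Thus $G$ is positive definite, and $\operatorname{rank}W=n$. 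As a cross-check, the top eigenvalue equals $t\binom{n}{t}/n\cdot$(row sum identity) $=\binom{n-1}{t-1}+(n-1)\binom{n-2}{t-2}$, consistent with the double count $\sum_j G_{ij}=t\binom{n-1}{t-1}$.

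For $t=n$ there is a single size-$n$ set, namely $\N$, and it contains every $i$. So $L_i(n)=d_{\N}$ for all $i$, and these functionals have joint rank $1$ (the functional $v\mapsto d_{\N}$ is nonzero).

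No step is a real obstacle. The only care point is the boundary case $t=1$ (where $\beta=0$ and $G=I$, trivially) and noting that positivity of $\alpha$ is what fails at $t=n$. If a reader prefers, there is a more direct argument: for $t\le n-1$ the row space of $W$ contains each $e_i$ in the dual sense, but the Gram computation is the one the paper wants as the $s=1$ case of Gottlieb's theorem.
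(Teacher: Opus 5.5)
Your argument is correct and is essentially the paper's proof: you use the same Gram matrix $WW^{\top}=\binom{n-2}{t-1}I+\binom{n-2}{t-2}J$, with eigenvalues $\binom{n-2}{t-1}>0$ (multiplicity $n-1$) and $\binom{n-2}{t-1}+n\binom{n-2}{t-2}>0$, and the same one-line observation for $t=n$. The only blemish is the garbled factor ``$t\binom nt/n\cdot$(row sum identity)'' in your optional cross-check, which you can delete, since the identity $\binom{n-1}{t-1}+(n-1)\binom{n-2}{t-2}=t\binom{n-1}{t-1}$ stands on its own.
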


\begin{proof}
$(WW^{\!\top})_{ij}$ counts size-$t$ sets containing $i$ and $j$:
$\binom{n-1}{t-1}$ on the diagonal, $\binom{n-2}{t-2}$ off it, so
$WW^{\!\top}=\binom{n-2}{t-1}I+\binom{n-2}{t-2}J$, with eigenvalues
$\binom{n-2}{t-1}>0$ (multiplicity $n-1$) and $\binom{n-2}{t-1}+n\binom{n-2}{t-2}>0$.
(This is the $s=1$ case of Gottlieb's inclusion-matrix theorem~\cite{Gottlieb}, proved
directly here. A three-line alternative: if $\sum_ic_ix_i$ vanishes on the slice, then
comparing any $T\ni i$, $T\not\ni j$ with $(T\setminus i)\cup j$ gives $c_i=c_j$, so
all coefficients equal a common $c$, and the sum equals $tc$ on every set, forcing
$c=0$.)
\end{proof}

\begin{theorem}[known; cf.\ \cite{KWalg,ADG,HLJS}]\label{thm:dimension}
$\dim\spn\{L_i(t)\}=n^2-n+1$, hence
$\dim\mathcal N_n=2^n-n^2+n-2$; in particular $\mathcal N_n=0$ for $n\le3$ and
$\dim\mathcal N_n/(2^n-1)\to1$.
\end{theorem}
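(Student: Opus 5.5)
The plan is to compute the dimension slice by slice in dividend coordinates, using Theorem~\ref{thm:harsanyi} to identify the joint information with $\spn\{L_i(t)\}$. Since $v\mapsto d$ is a linear isomorphism of $V_n$ onto $\{d:d_\varnothing=0\}$, the dividend coordinates split into the slices $\{d_T:|T|=t\}$ for $1\le t\le n$. Each functional $L_i(t)$ involves only the coordinates of slice $t$. Hence the span of all loadings is the direct sum over $t$ of the spans $\spn\{L_i(t):i\in\N\}$, and its dimension is the sum of the slice ranks.

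The slice ranks come from Lemma~\ref{lem:incidence}. On slice $t$, the functional $L_i(t)$ is represented by row $i$ of the incidence matrix $W$, so for $1\le t\le n-1$ the rank is $n$. For $t=n$ all loadings coincide with $d_{\N}$, so the rank is $1$. Summing over the slices gives $n(n-1)+1=n^2-n+1$. The direct-sum step is the one point that needs care: functionals supported on disjoint coordinate blocks are linearly independent across blocks, so the ranks add. This is routine.

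The kernel is the annihilator of this span in the $(2^n-1)$-dimensional space $V_n$. Its dimension is therefore
\[
2^n-1-(n^2-n+1)=2^n-n^2+n-2 .
\]
For the small cases, direct evaluation gives $2-2=0$ at $n=1$, $4-4=0$ at $n=2$, and $8-8=0$ at $n=3$. At $n=4$ it gives $16-14=2$, which agrees with Remark~\ref{rem:whyfour}. Finally,
\[
\frac{\dim\mathcal N_n}{2^n-1}=1-\frac{n^2-n+1}{2^n-1}\to1,
\]
because the polynomial $n^2-n+1$ is dominated by the exponential $2^n-1$.

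There is a degenerate point at $n=1$. There the range $1\le t\le n-1$ is empty and the formula still gives $1$, which agrees with $\dim V_1=1$. So the count needs no special treatment.
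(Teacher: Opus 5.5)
Your proposal is correct and follows the paper's proof essentially step for step: you decompose the span over the disjoint dividend slices, take the per-slice ranks $n$ (for $1\le t\le n-1$) and $1$ (for $t=n$) from Lemma~\ref{lem:incidence}, and subtract $n^2-n+1$ from $2^n-1$. Your explicit evaluations at $n=1,2$ are a small improvement, since the paper checks only $n=3$ when asserting $\mathcal N_n=0$ for $n\le3$.
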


\begin{proof}
The functionals $L_i(t)$ are supported on disjoint dividend slices, so the span
decomposes over $t$. By Lemma~\ref{lem:incidence} each slice $1\le t\le n-1$ contributes
exactly $n$, and $t=n$ contributes $1$: total $n(n-1)+1$. Subtract from
$\dim V_n=2^n-1$. For $n=3$: $8-9+3-2=0$; for $n=4$: $2$; the ratio claim is immediate.
(\textbf{numerical rank}: fingerprint and dividend-loading matrices have equal span,
with stacked rank unchanged, for $3\le n\le9$, with kernel dimensions
$0,2,10,32,84,198,438$ matching the formula.)
\end{proof}

\begin{remark}
Theorem~\ref{thm:dimension} can be read as an impossibility theorem for the
Dubey--Neyman--Weber axioms: \emph{any} value satisfying linearity, anonymity,
positivity, and dummy is constitutionally confined to an $(n^2-n+1)$-dimensional window
on a $(2^n-1)$-dimensional world. $n=3$ is the unique nontrivial player count at which
the window is the world.
\end{remark}

\section{The hierarchy theorem: the price of every interaction order}\label{sec:hierarchy}

This section computes what it costs to see what semivalues miss.
Section~\ref{sec:dimension} says the family reads only player totals; the natural
repair is to probe
pairs, triples, and higher tuples directly. This section computes exactly what each
probe order buys: which part of the game the order-$\le d$ family recovers, and its
dimension in closed form, so that the residual left invisible by any audit depth is a
formula rather than an unknown. The proof runs in three steps: a M\"obius identity
converting mixed differences into dividend sums (Lemma~\ref{lem:diffdiv}); a
unitriangular change of variables identifying the probes of a fixed tuple with
inclusion pairings (Lemma~\ref{lem:triangular}); and a nesting count showing the
orders stack into a filtration (Lemma~\ref{lem:nesting}). Gottlieb's theorem then
supplies each slice dimension.

The survey \cite[\S4]{CO} asks what generalized solution concepts built on the
constituents $S^{(n-j,j)}$, $j\ge2$, would mean, observing that ``finding meaningful
properties\dots could be challenging.'' The meaningful generalized solutions already
exist: they are the interaction indices of Grabisch--Roubens
\cite{GrabischRoubens}. The theorem below computes their exact joint information at
every order; the dividend-slice form is what makes the answer explicit.

For $T\subseteq\N$ and $S\cap T=\varnothing$ let
$\Delta_Tv(S)=\sum_{U\subseteq T}(-1)^{|T|-|U|}v(S\cup U)$ (the discrete mixed
difference), and define the order-$|T|$ \emph{size-aggregated interaction functionals}
\[
A_T(k)(v)\;=\;\sum_{\substack{|S|=k\\ S\cap T=\varnothing}}\Delta_Tv(S),
\qquad 0\le k\le n-|T| .
\]
We call $\{A_T(k):|T|\le d\}$ the \emph{order-$\le d$ mixed-difference audit
family}; the theorem below is stated and proved for this family. The classical
interaction indices are size-weighted averages of the same mixed differences: the
Grabisch--Roubens and Banzhaf interaction indices are two particular weightings
\cite{GrabischRoubens}, and the cardinal-probabilistic interaction indices allow
arbitrary admissible size weights, which span the size levels by the unitriangular
argument of Theorem~\ref{thm:harsanyi}. The joint span of the order-$|T|$
cardinal-probabilistic indices therefore equals $\spn\{A_T(k)\}$, so the theorem is
simultaneously the exact information ceiling for the interaction-index approach. On the size-$s$ dividend
slice, define the \emph{degree-$\le j$ subspace} of functionals as the span of the
inclusion pairings $d\mapsto\sum_{|W|=s,\ W\supseteq T'}d_W$ over $|T'|\le j$ (the
standard degree filtration of functions on a slice of the Boolean cube;
see~\cite{Filmus}). Its dimension is supplied by a rank theorem we state explicitly,
since it comes from algebraic combinatorics rather than game theory: for $j\le s$, the
inclusion matrix of $j$-subsets against $s$-subsets of $[n]$ has rank
$\min\bigl(\binom nj,\binom ns\bigr)$ over $\Q$ \cite{Gottlieb}, which equals
$\binom n{\min(j,s,n-s)}$; this binomial is therefore the dimension of the
degree-$\le j$ subspace.

\begin{lemma}\label{lem:diffdiv}
For $S\cap T=\varnothing$:\quad
$\displaystyle \Delta_Tv(S)=\sum_{R\subseteq S}d_{T\cup R}$.
\end{lemma}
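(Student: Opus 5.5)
The plan is to substitute the dividend expansion $v(X)=\sum_{W\subseteq X}d_W$ directly into the definition of $\Delta_Tv(S)$ and exchange the order of summation. This generalizes the case $T=\{i\}$ already worked out in the proof of Theorem~\ref{thm:harsanyi}. Since $S\cap T=\varnothing$, every $W\subseteq S\cup U$ with $U\subseteq T$ splits uniquely as $W=R\cup Q$ with $R=W\cap S\subseteq S$ and $Q=W\cap T\subseteq U$. Hence
\[
\Delta_Tv(S)=\sum_{U\subseteq T}(-1)^{|T|-|U|}\sum_{R\subseteq S}\sum_{Q\subseteq U}d_{R\cup Q}
=\sum_{R\subseteq S}\sum_{Q\subseteq T}d_{R\cup Q}\sum_{Q\subseteq U\subseteq T}(-1)^{|T|-|U|}.
\]

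The inner alternating sum is the key step. Writing $U=Q\cup U'$ with $U'\subseteq T\setminus Q$, it equals $\sum_{U'\subseteq T\setminus Q}(-1)^{|T\setminus Q|-|U'|}=(1-1)^{|T\setminus Q|}$. This vanishes unless $Q=T$, in which case it equals $1$. That is precisely the Möbius inversion identity already implicit in the definition of the dividends in \S\ref{sec:prelim}. Only the terms with $Q=T$ survive, which gives $\Delta_Tv(S)=\sum_{R\subseteq S}d_{T\cup R}$.

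There is no real obstacle here. The only point needing care is the bijection $W\leftrightarrow(R,Q)$, which relies on the disjointness hypothesis $S\cap T=\varnothing$. Without it, $T\cup R$ could double-count, and the splitting of $W$ would not be unique. The boundary case $T=\varnothing$ serves as a sanity check: the formula reduces to $v(S)=\sum_{R\subseteq S}d_R$, as it should. The case $|T|=1$ recovers the marginal-contribution identity from Theorem~\ref{thm:harsanyi}.
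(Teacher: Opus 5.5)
Your proof is correct and follows the paper's argument essentially verbatim. Like the paper, you substitute the dividend expansion, use $S\cap T=\varnothing$ to split each $W\subseteq S\cup U$ uniquely as $R\cup Q$, exchange the sums, and eliminate every term with $Q\neq T$ via the binomial identity $(1-1)^{|T\setminus Q|}=\mathbf 1[Q=T]$.
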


\begin{proof}
Each $W\subseteq S\cup U$ splits uniquely as $W=R\cup P$ with $R\subseteq S$,
$P\subseteq U$ (disjointness of $S$ and $T$). Hence
\[
\Delta_Tv(S)=\sum_{U\subseteq T}(-1)^{|T\setminus U|}\sum_{R\subseteq S}\sum_{P\subseteq U}d_{R\cup P}
=\sum_{R\subseteq S}\sum_{P\subseteq T}d_{R\cup P}\!\!\sum_{\substack{U:\,P\subseteq U\subseteq T}}\!\!(-1)^{|T\setminus U|} .
\]
The inner sum is $(1-1)^{|T\setminus P|}=\mathbf 1[P=T]$ by the binomial theorem.
\end{proof}

\begin{lemma}\label{lem:triangular}
Fix $T$, $t=|T|$, and let $E_T(m)(v)=\sum_{|R|=m,\ R\cap T=\varnothing}d_{T\cup R}$.
Then $A_T(k)=\sum_{m=0}^{k}\binom{n-t-m}{\,k-m\,}E_T(m)$, and consequently
$\spn\{A_T(k):0\le k\le n-t\}=\spn\{E_T(m):0\le m\le n-t\}$.
\end{lemma}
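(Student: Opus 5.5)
The plan is to substitute Lemma~\ref{lem:diffdiv} into the definition of $A_T(k)$, exchange the order of summation, and then read the span statement off a unitriangular matrix, exactly as in the proof of Theorem~\ref{thm:harsanyi}.

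\emph{Step 1 (the expansion).} By Lemma~\ref{lem:diffdiv},
\[
A_T(k)(v)=\sum_{\substack{|S|=k\\ S\cap T=\varnothing}}\ \sum_{R\subseteq S}d_{T\cup R}
=\sum_{\substack{R\cap T=\varnothing\\ |R|\le k}}d_{T\cup R}\cdot\#\{S:\ R\subseteq S\subseteq \N\setminus T,\ |S|=k\}.
\]
For $|R|=m$, choosing such an $S$ amounts to choosing $k-m$ further elements from the $n-t-m$ elements of $\N\setminus(T\cup R)$. The count is therefore $\binom{n-t-m}{k-m}$, which depends only on $m$. Grouping the $R$ by $m=|R|$ then gives
\[
A_T(k)=\sum_{m=0}^{k}\binom{n-t-m}{k-m}E_T(m),
\]
which is the stated identity.

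\emph{Step 2 (equality of spans).} Index both families by $0,\dots,n-t$. The coefficient matrix $M_{km}=\binom{n-t-m}{k-m}$ is lower triangular, since the binomial vanishes for $m>k$. Its diagonal entries are $\binom{n-t-k}{0}=1$, so $M$ is unitriangular and hence invertible over $\Q$. Each $A_T(k)$ is therefore a combination of the $E_T(m)$. Conversely, each $E_T(m)$ is a combination of the $A_T(k)$, obtained by inverting $M$ (forward substitution). The two spans coincide.

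The argument is routine throughout. The only care point is the counting in Step~1. The admissible supersets $S$ must avoid $T$ but are otherwise free outside $R$. This is what makes the coefficient depend only on $|R|$ and not on $R$ itself; that dependence on size alone is exactly what allows the regrouping into the $E_T(m)$.
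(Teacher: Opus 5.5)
Your proposal is correct and follows the paper's proof essentially step for step. You substitute Lemma~\ref{lem:diffdiv}, count the $\binom{n-t-m}{k-m}$ sets $S$ containing $R$ and avoiding $T$, and conclude from the unitriangular coefficient matrix $\bigl[\binom{n-t-m}{k-m}\bigr]_{k,m}$ that the spans coincide.
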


\begin{proof}
By Lemma~\ref{lem:diffdiv},
$A_T(k)=\sum_{R\cap T=\varnothing}d_{T\cup R}\cdot\#\{S:\,|S|=k,\ S\supseteq R,\
S\cap T=\varnothing\}$, and the count is $\binom{n-t-|R|}{k-|R|}$. The coefficient matrix
$\bigl[\binom{n-t-m}{k-m}\bigr]_{k,m}$ has $\binom{n-t-k}{0}=1$ on the diagonal and $0$
for $m>k$, hence is unitriangular and invertible.
\end{proof}

\begin{lemma}[nesting]\label{lem:nesting}
On the size-$s$ slice, the span of inclusion pairings of order exactly $j$ contains the
span of order $j-1$, for $1\le j\le s$: summing the order-$j$ pairings over the $T\supseteq
T_0$ with $|T|=j$ yields, on each $W$ with $|W|=s$,
$\#\{T:\,T_0\subseteq T\subseteq W,\,|T|=j\}=\binom{s-j+1}{1}\cdot\mathbf 1[T_0\subseteq W]$
when $|T_0|=j-1$, a positive multiple of the order-$(j-1)$ pairing.
\end{lemma}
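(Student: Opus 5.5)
The plan is a direct double-counting argument, followed by a check that the resulting multiplier is nonzero; no earlier result is needed beyond the definition of the inclusion pairings. For $|T|=j$, write the order-$j$ inclusion pairing on the size-$s$ slice as $P_T(d)=\sum_{|W|=s,\,W\supseteq T}d_W$. Fix $T_0$ with $|T_0|=j-1$ and form
\[
Q=\sum_{\substack{|T|=j\\ T\supseteq T_0}}P_T .
\]

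First, I will interchange the order of summation:
\[
Q(d)=\sum_{|W|=s}d_W\cdot\#\{T:\,T_0\subseteq T\subseteq W,\ |T|=j\}.
\]
Second, I will evaluate the count. If $T_0\not\subseteq W$, it is zero, since no such $T$ exists. If $T_0\subseteq W$, each admissible $T$ has the form $T_0\cup\{x\}$ with $x\in W\setminus T_0$. There are exactly $|W\setminus T_0|=s-j+1$ choices of $x$, which gives the count $\binom{s-j+1}{1}\mathbf 1[T_0\subseteq W]$. Hence $Q=(s-j+1)P_{T_0}$.

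Third, I will check positivity. The hypothesis $j\le s$ gives $s-j+1\ge1$, so $P_{T_0}=Q/(s-j+1)$. This expresses $P_{T_0}$ as a linear combination of order-$j$ pairings. Since $T_0$ was an arbitrary $(j-1)$-set, the whole order-$(j-1)$ span lies inside the order-$j$ span.

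The only point needing care is the range condition. For $j>s$ there are no $j$-subsets of $W$, every order-$j$ pairing vanishes identically, and nesting genuinely fails; this is why the statement restricts to $1\le j\le s$. I will also note that the case $j=s$ is still fine: the multiplier is $1$ and each $P_T$ is just a coordinate. Inducting on $j$ then shows that the span of order exactly $j$ equals the degree-$\le j$ span. This is the form used to identify the dimensions with Gottlieb's rank $\binom n{\min(j,s,n-s)}$.
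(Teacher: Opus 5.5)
Your argument is correct and matches the paper's own proof: you interchange the sums so that the coefficient of $d_W$ is the number of $j$-sets between $T_0$ and $W$, count the $s-j+1$ one-element extensions when $T_0\subseteq W$, and use $j\le s$ for positivity. One small correction to your side remark: nesting fails only at $j=s+1$, because for $j\ge s+2$ the order-$(j-1)$ pairings also vanish identically, so the containment holds trivially.
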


\begin{proof}
Direct count: extending $T_0$ by one element of $W\setminus T_0$ gives $s-(j-1)$ choices.
\end{proof}

\begin{theorem}\label{thm:hierarchy}
For every $d\ge1$, the joint span of the order-$\le d$ size-aggregated interaction
functionals is exactly $\bigoplus_{s=1}^{n}\{\text{degree-}\le d\text{ subspace of the
size-$s$ dividend slice}\}$, of dimension
\[
\sum_{s=1}^{n}\binom{n}{\min(d,\,s,\,n-s)} .
\]
\end{theorem}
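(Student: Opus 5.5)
The plan is to reduce everything to dividend coordinates and then to a slice-by-slice rank count, using the three lemmas in the order they were set up.

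\emph{Step 1: from audits to dividend pairings.} By Lemma~\ref{lem:triangular}, for each fixed $T$ with $|T|=t\le d$ the span of $\{A_T(k)\}_k$ equals the span of $\{E_T(m)\}_m$. Write $s=t+m$. Then $E_T(m)$ is the functional $d\mapsto\sum_{|W|=s,\,W\supseteq T}d_W$. This is exactly the order-$t$ inclusion pairing of $T$ on the size-$s$ dividend slice, and it is supported on that slice alone. Hence the joint span of the order-$\le d$ family is the span of all inclusion pairings $P_{T}^{(s)}$ with $|T|\le d$ and $|T|\le s\le n$. Note that $T=\varnothing$ never occurs among the audits, since they require $|T|\ge1$.

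\emph{Step 2: the span splits over slices.} Functionals supported on different dividend slices are linearly independent as families. So the span decomposes as a direct sum over $s$, and on slice $s$ the summand is $\spn\{P^{(s)}_T:1\le|T|\le\min(d,s)\}$.

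\emph{Step 3: identify each summand with the degree-$\le d$ subspace.} Two containments are needed. First, every pairing with $|T|\le d$ lies in the degree-$\le d$ subspace by definition. Second, the degree-$\le d$ subspace is spanned by pairings with $|T'|\le d$, and this includes $T'=\varnothing$ (the constant pairing $\sum_{|W|=s}d_W$) as well as any $|T'|>s$ (which give zero). Pairings with $|T'|>s$ can therefore be dropped, and pairings with $|T'|\le\min(d,s)$ reduce to order exactly $\min(d,s)$ by iterating Lemma~\ref{lem:nesting}. That lemma writes an order-$(j-1)$ pairing as $1/(s-j+1)$ times a sum of order-$j$ pairings, valid for $j\le s$. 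In particular the $T'=\varnothing$ pairing is recovered from the order-$1$ pairings whenever $s\ge1$. So on slice $s$ the span equals the span of the order-$\min(d,s)$ pairings, which is the degree-$\le d$ subspace.

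\emph{Step 4: count dimensions.} The dimension on slice $s$ is the rank of the inclusion matrix of $j$-subsets versus $s$-subsets with $j=\min(d,s)\le s$. By Gottlieb's theorem this rank is $\min\bigl(\binom nj,\binom ns\bigr)=\binom n{\min(j,s,n-s)}=\binom n{\min(d,s,n-s)}$. Summing over $s=1,\dots,n$ gives the formula. As a sanity check, at $s=n$ the term is $\binom n0=1$, matching Lemma~\ref{lem:incidence}.

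The main obstacle is the bookkeeping in Step 3: confirming that the nesting lemma is applied only within its valid range $j\le s$, and that the empty-set pairing and the pairings with $|T'|>s$ cause no discrepancy. Everything else is a direct assembly of Lemmas~\ref{lem:diffdiv}--\ref{lem:nesting} with the cited rank theorem. I would also verify that the identity $\min\bigl(\binom nj,\binom ns\bigr)=\binom n{\min(j,s,n-s)}$ holds in the regime $j\le s$, by comparing distances from $n/2$.
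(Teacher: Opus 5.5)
Your proposal is correct and follows the paper's proof step for step: Lemma~\ref{lem:triangular} takes you from the $A_T(k)$ to the inclusion pairings $E_T(m)$, the span splits as a direct sum over dividend slices, Lemma~\ref{lem:nesting} collapses lower orders into order $\min(d,s)$, and Gottlieb's rank theorem gives the count. Your extra bookkeeping is sound, with one small correction: the paper's family $\{A_T(k):|T|\le d\}$ does not explicitly exclude $T=\varnothing$ (which gives the slice totals $B(k)$), but your Step~3 already recovers that pairing from the order-$1$ ones, so the conclusion is unaffected.
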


\begin{proof}
By Lemma~\ref{lem:triangular}, for each $T$ with $|T|\le d$ the functionals
$\{A_T(k)\}_k$ span exactly $\{E_T(m)\}_m$, and $E_T(m)$ is precisely the inclusion
pairing of $T$ against the size-$(|T|+m)$ dividend slice. Taking the union over
$|T|\le d$, the joint span is the span, on each slice $s$, of all inclusion pairings of
order $\le\min(d,s)$, which is the degree-$\le d$ subspace by definition; the orders
are nested by Lemma~\ref{lem:nesting}, so lower orders add nothing beyond the
filtration. The dimension per slice is $\binom n{\min(d,s,n-s)}$ by Gottlieb's
theorem~\cite{Gottlieb} (for $d\ge\min(s,n-s)$ the degree filtration saturates at the
full slice, of dimension $\binom ns=\binom n{\min(s,n-s)}$); summing over slices gives the
formula. (\textbf{numerical rank}: dimensions verified for $d=1$ at $3\le n\le9$, $d=2$ at
$4\le n\le7$, $d=3$ at $6\le n\le10$, $d=4$ at $10\le n\le12$, four consecutive
orders and fourteen strict cases, the fifteen cases with $n\le9$ additionally certified
in \textbf{exact} arithmetic over two prime fields; and the \emph{span identity} itself verified at $n=6$,
$d=2$: order-$\le2$ and dividend-slice degree-$\le2$ systems have equal rank $58$ with
stacked rank unchanged.)
\end{proof}

\begin{remark}
Conceptually: on each size-$s$ dividend slice, the order-$\le d$ probes recover exactly
the Johnson-scheme components of degrees $0,\dots,\min(d,s,n-s)$; the inclusion-pairing
span used above is the concrete model of those components (Gottlieb for the rank, Filmus
for the slice-degree language).
\end{remark}

\begin{corollary}\label{cor:rungs}
The exact information recovered at rung $d$ but not at rung $d-1$ has dimension
$\sum_s\bigl[\binom n{\min(d,s,n-s)}-\binom n{\min(d-1,s,n-s)}\bigr]$; the residual after
rung $d$ vanishes iff $d\ge\lfloor n/2\rfloor$. The ladder from the Shapley value to full
information is therefore complete, with a closed form at every step.
\end{corollary}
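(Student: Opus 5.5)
The corollary follows from the dimension formula of Theorem~\ref{thm:hierarchy} together with the nesting of the audit families. The plan has three steps.

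First, the rung-$d$ family contains the rung-$(d-1)$ family, so their spans are nested. Write $D(d)=\sum_{s=1}^n\binom n{\min(d,s,n-s)}$ for the dimension given by Theorem~\ref{thm:hierarchy}. The information new at rung $d$ is the quotient of the two spans. Its dimension is $D(d)-D(d-1)$, and subtracting the two sums slice by slice gives the displayed formula. At $d=1$ the rung-$0$ family is the $T=\varnothing$ functionals $A_\varnothing(k)$. I would check that these span the degree-$0$ pairings, one per slice, so that $D(0)=\sum_s\binom n0$ is consistent with the formula. I would also note that the $s=n$ slice contributes $\binom n0=1$ at every rung.

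Second, the residual after rung $d$ has dimension $(2^n-1)-D(d)$, and $\sum_{s=1}^n\binom ns=2^n-1$. Each summand satisfies $\binom n{\min(d,s,n-s)}\le\binom n{\min(s,n-s)}=\binom ns$. This holds because binomials $\binom nk$ increase in $k$ for $k\le n/2$, and both $\min(d,s,n-s)$ and $\min(s,n-s)$ are at most $\lfloor n/2\rfloor$. Equality holds on slice $s$ exactly when $\min(d,s,n-s)=\min(s,n-s)$, that is, when $d\ge\min(s,n-s)$. So the residual vanishes if and only if $d\ge\min(s,n-s)$ for every $1\le s\le n$. Since $\max_s\min(s,n-s)=\lfloor n/2\rfloor$, attained at $s=\lfloor n/2\rfloor$, this condition is exactly $d\ge\lfloor n/2\rfloor$.

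Third, for the "only if" direction, take $d<\lfloor n/2\rfloor$ and $s=\lfloor n/2\rfloor$. Then $\min(d,s,n-s)=d<s\le n/2$, and strict monotonicity of $\binom nk$ for $k\le n/2$ gives a strict deficit on that slice.

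No step is hard. The only point needing care is the strict inequality $\binom nd<\binom n{\lfloor n/2\rfloor}$ for $d<\lfloor n/2\rfloor$, which is the standard unimodality of binomial coefficients. The closing sentence of the corollary ("the ladder is complete") is interpretive and follows from the $d=\lfloor n/2\rfloor$ case.
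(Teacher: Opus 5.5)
Your proposal is correct and follows the paper's route: the corollary is read off Theorem~\ref{thm:hierarchy} by subtracting the dimension counts at consecutive rungs, and by noting that slice $s$ saturates exactly when $d\ge\min(s,n-s)$, which is the same saturation argument the paper gives in the proof of Theorem~\ref{thm:manipulation}(b). Two details the paper leaves implicit are supplied by you: the strict monotonicity of $\binom nk$ on $0\le k\le\lfloor n/2\rfloor$, and the $d=1$ case, which you check via the $T=\varnothing$ functionals (rung $0$).
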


\section{Regularity and the blind spot: what it excludes, what it does not}\label{sec:cones}

Section~\ref{sec:dimension} measured how large the blind space is; this section asks
whether economically well-behaved games ever occupy it. The answer is a precise split. Purely
blind games are pathological, failing every standard regularity property, so a regular
economy is never entirely invisible; yet regular economies sit on affine translates of
blind directions, so identification from semivalue data fails anyway, with exact
thresholds. Theorem~\ref{thm:cones} gives the exclusions,
Theorem~\ref{thm:convexfiber} the failure.

Call $v\in\mathcal N_n$ \emph{blind}. Two elementary consequences of blindness do all
the work below. First, $m_i(0)=v(\{i\})$, so every singleton value of a blind game is
zero. Second, summing the fingerprints over players gives the identity
$\sum_i m_i(k-1)=k\,B(k)-(n-k+1)\,B(k-1)$ for the slice totals
$B(k)(v)=\sum_{|S|=k}v(S)$; with $B(1)=0$ from the singletons, induction gives
$B(k)=0$ for every $k$. A blind game therefore gives every individual nothing and
every coalition size nothing in total, so any structure it carries is pure
redistribution within slices. The proofs below are then one idea each: regularity forces
nonnegativity, and nonnegative games with zero slice totals vanish.

\begin{theorem}\label{thm:cones}
Let $v\in\mathcal N_n$ be nonzero. Then:
\textup{(a)} $v$ is not superadditive; in particular $\mathcal N_n$ meets the
superadditive cone, and a fortiori the convex cone, only at $0$.
\textup{(b)} $v$ is not monotone.
\textup{(c)} the core of $v$ is empty.
\end{theorem}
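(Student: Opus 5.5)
The plan follows the paragraph before the statement: first establish that a blind game has all singleton values zero and all slice totals $B(k)=0$; then show that each regularity property forces $v\ge0$, or forces a sign pattern incompatible with zero slice totals unless $v=0$.

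\emph{Preliminary facts.} For $v\in\mathcal N_n$ we have $v(\{i\})=m_i(0)(v)=0$. Summing fingerprints gives $\sum_i m_i(k-1)=kB(k)-(n-k+1)B(k-1)$, because each $|S|=k$ appears once positively for each of its $k$ members, and each $|S|=k-1$ appears negatively for each of the $n-k+1$ outsiders. Since every $m_i(k-1)$ vanishes and $B(0)=0$, induction gives $B(k)=0$ for all $k$. These facts drive all three parts.

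\emph{(a) Superadditivity.} Suppose $v$ is superadditive. By induction on $|S|$, splitting $S$ into a singleton and the rest, $v(S)\ge v(S\setminus i)+v(\{i\})=v(S\setminus i)\ge\dots\ge0$. So $v\ge0$ on every coalition. Then $B(k)=\sum_{|S|=k}v(S)=0$ with nonnegative summands forces $v(S)=0$ for all $S$, a contradiction. The cone statement follows because convex games are superadditive.

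\emph{(b) Monotonicity.} Monotonicity with $v(\varnothing)=0$ gives $v\ge0$ directly, and the same slice-total argument applies.

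\emph{(c) Core.} Suppose $x$ is in the core, so $x(S)\ge v(S)$ for all $S$ and $x(\N)=v(\N)$. We have $v(\N)=B(n)=0$, so $\sum_i x_i=0$. Singleton constraints give $x_i\ge v(\{i\})=0$, hence $x=0$. Then $v(S)\le0$ for every $S$. Combined with $B(k)=0$, this forces $v=0$, again a contradiction.

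\emph{Main obstacle.} The only step with real content is the fingerprint identity and its induction to $B(k)=0$; everything else is a short sign argument. I would verify that identity carefully by the double-counting described above.
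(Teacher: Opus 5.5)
Your proof is correct. The preliminary facts match the paper's argument: $v(\{i\})=m_i(0)(v)=0$, and the double-counting identity $\sum_i m_i(k-1)=kB(k)-(n-k+1)B(k-1)$ gives $B(k)=0$ by induction. Parts (a) and (b) also match it.

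Part (c) takes a slightly different route. You use individual rationality $x_i\ge v(\{i\})=0$ together with efficiency $x(\N)=v(\N)=B(n)=0$ to force the core allocation to be $x=0$. Then $v\le 0$ everywhere, and the zero slice totals force $v\equiv0$. This makes all three parts instances of one template: a sign constraint on $v$ combined with $B(k)=0$.

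The paper instead sums the core inequalities over each slice. Using $\sum_{|S|=k}x(S)=\binom{n-1}{k-1}x(\N)=0=B(k)$, it concludes $x(S)=v(S)$ for every $S$, so $v$ is additive. Only then does it invoke the zero singleton values to get $v\equiv0$.

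Your version is shorter, needs no binomial count, and keeps the three parts uniform. The paper's version isolates an intermediate fact: zero slice totals alone already force any game with a nonempty core to be additive. The vanishing singletons are then needed only to rule out those additive games.
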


\begin{proof}
Blindness gives $v(\{i\})=0$ for all $i$ and $B(k)=0$ for all $k$ (and $v(\N)=B(n)=0$).

(a) Suppose $v$ superadditive. For any $S=\{i_1,\dots,i_s\}$, iterating
$v(A\cup\{i\})\ge v(A)+v(\{i\})$ along a chain gives
$v(S)\ge\sum_{j}v(\{i_j\})=0$. So $v\ge0$ everywhere; but each slice sums to
$B(k)=0$, forcing $v\equiv0$, a contradiction. Convex games are superadditive, so the
second claim follows.

(b) Monotone with $v(\varnothing)=0$ gives $v\ge0$; conclude as in (a).

(c) Let $x\in\R^n$ be a core allocation: $x(S):=\sum_{i\in S}x_i\ge v(S)$ for all $S$ and
$x(\N)=v(\N)=0$. For each $k$,
$\sum_{|S|=k}x(S)=\binom{n-1}{k-1}\sum_ix_i=\binom{n-1}{k-1}\,x(\N)=0=B(k)=\sum_{|S|=k}v(S)$.
So $\sum_{|S|=k}[x(S)-v(S)]=0$ with every summand $\ge0$, forcing $x(S)=v(S)$ for all
$S$. Then $v$ is additive, so $v(\{i\})=x_i$; blindness gives $x_i=0$, hence $v\equiv0$,
a contradiction. (\textbf{numerical rank}: $50/50$ random kernel games violate superadditivity and
are non-additive; singleton values $0$ to $3\times10^{-16}$.)
\end{proof}

\begin{remark}
Theorem~\ref{thm:cones} says \emph{purely} blind games are economically pathological.
It does not say regular games are semivalue-identifiable; the next theorem shows they
are not: the regular cones, while meeting $\mathcal N_n$ only at $0$, contain nontrivial
parallel translates of blind directions.
\end{remark}

\begin{theorem}[identification failure within the convex cone]\label{thm:convexfiber}
Let $n\ge4$, $h(S)=|S|^2$, and $g=g_{abcd}$ from Example~\ref{ex:gabcd} for any four
distinct players. Then for every $\varepsilon$ with $0<\varepsilon<2$, the game
$h+\varepsilon g$ is convex and superadditive (and monotone for $\varepsilon<3$),
$h+\varepsilon g\neq h$, and yet
$\psi(h+\varepsilon g)=\psi(h)$ for every semivalue $\psi$ and every player. The
thresholds are exact for this pair. Consequently semivalue data does not identify the
game even within the convex cone; Corollary~\ref{cor:smallball} below extends the
construction to every audit order.
\end{theorem}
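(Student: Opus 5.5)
The plan is to work entirely in Harsanyi-dividend coordinates, where both the invisibility and the regularity conditions reduce to one-line statements about the dividends of $h$ and $g$. The first step is to compute the dividends of $h$. Since $|S|^2=|S|+2\binom{|S|}{2}$, the unique expansion $h(S)=\sum_{T\subseteq S}d_T$ gives $d_{\{i\}}=1$ for every $i$, $d_{\{i,j\}}=2$ for every pair, and $d_T=0$ for $|T|\ge3$. By Example~\ref{ex:gabcd}, $g$ has only pair dividends, namely $+1$ on $ab,cd$, $-1$ on $ac,bd$, and $0$ elsewhere. Hence $v_\varepsilon:=h+\varepsilon g$ has singleton dividends $1$, pair dividends $2+\varepsilon g_{ij}$ with $g_{ij}\in\{-1,0,+1\}$, and no higher dividends.

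For the invariance I will invoke Theorem~\ref{thm:harsanyi}. Every semivalue is a linear combination of the loadings $L_i(t)$, and all loadings of $g$ vanish (Example~\ref{ex:gabcd}). So $\psi(v_\varepsilon)=\psi(h)+\varepsilon\psi(g)=\psi(h)$ for every semivalue. Moreover $g\neq0$, so $v_\varepsilon\neq h$ whenever $\varepsilon\neq0$.

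Convexity is supermodularity, which is equivalent to $\Delta_{\{i,j\}}v(S)\ge0$ for all $i\ne j$ and all $S$ disjoint from $\{i,j\}$. By Lemma~\ref{lem:diffdiv}, $\Delta_{\{i,j\}}v_\varepsilon(S)=\sum_{R\subseteq S}d_{\{i,j\}\cup R}=2+\varepsilon g_{ij}$, because there are no dividends of size $\ge3$. This is independent of $S$, and its minimum is $2-\varepsilon$, attained at the pairs $ac$ and $bd$. So $v_\varepsilon$ is convex for $\varepsilon\le2$, in particular on $(0,2)$. Superadditivity follows from convexity together with $v(\varnothing)=0$.

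For monotonicity I will use the marginal contributions. For $i\notin S$ these are
\[
v_\varepsilon(S\cup i)-v_\varepsilon(S)=1+2|S|+\varepsilon\sum_{j\in S}g_{ij}.
\]
Each of $a,b,c,d$ has exactly one $+1$ partner and one $-1$ partner, and every other player has none. The inner sum is therefore at least $-1$, and it can equal $-1$ only when $|S|\ge1$. So the marginal is at least $1+2-\varepsilon=3-\varepsilon$ for $|S|\ge1$, and equals $1$ when $S=\varnothing$. Hence $v_\varepsilon$ is monotone for $\varepsilon\le3$.

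Exactness is by explicit witnesses. For $\varepsilon>2$, $\Delta_{\{a,c\}}v_\varepsilon(\varnothing)=2-\varepsilon<0$, so convexity fails. Superadditivity also fails, since $v_\varepsilon(\{a,c\})=4-\varepsilon<2=v_\varepsilon(\{a\})+v_\varepsilon(\{c\})$. For $\varepsilon>3$, $v_\varepsilon(\{a,c\})-v_\varepsilon(\{c\})=3-\varepsilon<0$, so monotonicity fails.

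I expect no genuine obstacle here. The only care point is bookkeeping the signs of $g_{ij}$, so that the worst case is the same $-1$ pair in each argument. A secondary care point is reading ``exact'' correctly: the argument shows the boundary values themselves still give weakly convex and weakly monotone games, and the property fails immediately beyond them.
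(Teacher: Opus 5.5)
Your proof is correct and follows essentially the same route as the paper: invariance from Theorem~\ref{thm:harsanyi}, constant mixed second differences $2+\varepsilon g_{ij}$ for convexity, marginals $1+2|S|+\varepsilon\sum_{j\in S}g_{ij}$ for monotonicity, and binding witnesses at the pair $\{a,c\}$. The one divergence is superadditivity: you derive it from supermodularity and $v(\varnothing)=0$, with exactness from $v_\varepsilon(\{a,c\})=4-\varepsilon$ versus $v_\varepsilon(\{a\})+v_\varepsilon(\{c\})=2$, whereas the paper runs a separate cross-term count over disjoint $S,T$; your shortcut is valid and shorter.
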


\begin{proof}
Equality of values is Theorem~\ref{thm:harsanyi} with Example~\ref{ex:gabcd}
($g\in\mathcal N_n$, embedded via Lemma~\ref{lem:embedding}). Convexity is
supermodularity: for $i\neq j\notin S$ the second difference of $h$ is
$(s+2)^2-2(s+1)^2+s^2=2$ identically, while $g$ is multilinear of degree $2$, so its
mixed second difference is the constant $\partial_i\partial_j g\in\{0,\pm1\}$
(Example~\ref{ex:gabcd}); hence the perturbed second difference is
$\ge2-\varepsilon>0$ for $\varepsilon<2$, and the value $-1$ is attained (at
$\partial_a\partial_c$), making the threshold exact. Monotonicity: the first
difference of $h$ at a set of size $s$ is $2s+1$, while
$\partial_a g(S)=\mathbf 1[b\in S]-\mathbf 1[c\in S]$ equals $-1$ only when
$c\in S$, which forces $s\ge1$ and margin $\ge3-\varepsilon$; the case $S=\{c\}$,
$i=a$ binds, so the threshold $3$ is exact. Superadditivity, directly: for disjoint
nonempty $S,T$ the $h$-margin is $(s+t)^2-s^2-t^2=2st$, while the $g$ cross-term is
the sum of the dividends of the support pairs split between $S$ and $T$; since $g$ has
exactly two negative pair dividends, $d_{ac}=d_{bd}=-1$, no split collects less than
$-2$, and two singletons meet at most one support pair, so the cross-term is at least
$-1$ when $s=t=1$ and at least $-2$ otherwise, where $2st\ge4$; the case $S=\{a\}$, $T=\{c\}$ gives margin
$2-\varepsilon$ and binds, so the threshold $2$ is exact. (\textbf{exact}: all three
thresholds confirmed in rational arithmetic over all instances at $n=4$.)
\end{proof}

\begin{corollary}[convex fibers at every audit order]\label{cor:smallball}
Let $h(S)=|S|^2$ and let $w$ be any non-additive game, so that
$M_2(w)=\max_{i\ne j,\,S}|\Delta_{\{i,j\}}w(S)|>0$. Every mixed second difference
of $h$ equals $2$, so $h+\varepsilon w$ is supermodular, hence convex and
superadditive, for all $0<\varepsilon<2/M_2(w)$, and monotone for
$\varepsilon<1/\max_{i,S}|\Delta_{\{i\}}w(S)|$. Every nonzero blind game is
non-additive (an additive blind game has zero singleton values, hence vanishes), so
the hypothesis is automatic in the application below. In particular, for every
$c\ge2d+2$, taking $w$ to be a nonzero internal restructuring of a size-$c$ coalition
invisible to the full order-$\le d$ audit family
(Theorem~\ref{thm:manipulation}(b)) yields, for all sufficiently small
$\varepsilon>0$, two distinct convex games with identical outputs under every
interaction functional of order $\le d$.
\end{corollary}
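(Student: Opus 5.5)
The proof has three parts: a supermodularity estimate, a monotonicity estimate, and the application to coalition restructurings via Theorem~\ref{thm:manipulation}(b).

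\emph{Supermodularity.} For $i\neq j\notin S$,
\[
\Delta_{\{i,j\}}(h+\varepsilon w)(S)=\Delta_{\{i,j\}}h(S)+\varepsilon\,\Delta_{\{i,j\}}w(S).
\]
As computed in the proof of Theorem~\ref{thm:convexfiber}, $\Delta_{\{i,j\}}h(S)=(s+2)^2-2(s+1)^2+s^2=2$. The mixed second difference is therefore at least $2-\varepsilon M_2(w)$, which is positive for $\varepsilon<2/M_2(w)$. Positive second differences are exactly supermodularity, that is, convexity. Superadditivity follows from convexity together with $v(\varnothing)=0$: build $S\cup T$ by adding the elements of $T$ one at a time and compare each marginal with the corresponding marginal over $\varnothing$.

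\emph{Monotonicity.} For $i\notin S$, $\Delta_{\{i\}}h(S)=2s+1\ge1$, so
\[
\Delta_{\{i\}}(h+\varepsilon w)(S)\ge1-\varepsilon\max_{i,S}|\Delta_{\{i\}}w(S)|>0
\]
under the stated bound. If $w$ is additive on singletons but $\max_{i,S}|\Delta_{\{i\}}w(S)|=0$, then $w=0$, so the bound is meaningful whenever $w\ne0$.

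\emph{Blind games are non-additive.} Suppose $w$ is additive and blind. Then $w(S)=\sum_{i\in S}w(\{i\})$, and blindness gives $m_i(0)(w)=w(\{i\})=0$ for every $i$, so $w\equiv0$. Hence every nonzero blind game is non-additive. The same conclusion holds for a game invisible only to the order-$\le d$ audits with $d\ge1$: those audits include the $T=\{i\}$, $k=0$ functional, which is $w(\{i\})$, so the singleton values vanish again. Moreover $M_2(w)=0$ forces every mixed second difference to vanish, which by Lemma~\ref{lem:diffdiv} kills every dividend of size $\ge2$. So $M_2(w)>0$ is equivalent to non-additivity.

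\emph{Application.} Given $c\ge2d+2$, take $w\ne0$ from Theorem~\ref{thm:manipulation}(b), annihilated by every $A_T(k)$ with $|T|\le d$. The audit functionals are linear, so $h+\varepsilon w$ and $h$ have identical outputs under them. The two games are distinct since $\varepsilon w\ne0$. By the supermodularity step, both are convex for $0<\varepsilon<2/M_2(w)$, and $h$ itself is convex because its second differences equal $2$.

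The main dependency is on Theorem~\ref{thm:manipulation}(b), which supplies a nonzero $w$ supported inside the coalition and annihilated by the order-$\le d$ family. In effect this is Theorem~\ref{thm:hierarchy} applied to a $c$-player subgame, where the degree-$\le d$ subspace of the middle slice fails to saturate when $d<\lfloor c/2\rfloor$, embedded via Lemma~\ref{lem:embedding}. I would cite it rather than reprove it. Everything else is the two-line difference estimates above.
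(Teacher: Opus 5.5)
Your proposal is correct and follows the paper's own inline argument: the constant second difference $2$ of $h$ absorbs $\varepsilon\,\Delta_{\{i,j\}}w$ when $\varepsilon<2/M_2(w)$, the first difference $2s+1\ge1$ absorbs $\varepsilon\,\Delta_{\{i\}}w$, and linearity of the audit functionals applied to the $w$ of Theorem~\ref{thm:manipulation}(b) yields two distinct convex games with identical outputs. Your two extra observations are worth keeping: that $M_2(w)>0$ is equivalent to non-additivity via Lemma~\ref{lem:diffdiv}, and that invisibility to the order-$\le d$ family with $d\ge1$ already forces $w(\{i\})=A_{\{i\}}(0)(w)=0$, which makes explicit why the hypothesis is automatic for that $w$ (the paper phrases this step only for semivalue-blind games).
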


\begin{remark}
The two theorems together give the correct slogan: regularity excludes games whose
\emph{entire substance} is blind, but even strictly convex economies admit
families of semivalue-equivalent neighbors of dimension $2^n-n^2+n-2$. For an
analyst, allocation data from the whole semivalue family, every power index and every
$\beta$-weighted variant, cannot identify the coalitional production function, even
under the strongest standard regularity assumptions.
\end{remark}

\section{The ladder theorem: anonymity is the binding axiom}\label{sec:ladder}

This section locates the cause of the blindness among the Dubey--Neyman--Weber
axioms. That \emph{some} value family attains full information is
easy: Weber's probabilistic values do, by a three-line argument given in
Corollary~\ref{cor:ladder}. The content of Theorem~\ref{thm:weighted} is that full
information is attained already \emph{inside} the weighted Shapley family, which
retains efficiency, the dummy axiom, and positivity, and abandons only anonymity.
This isolates anonymity as the single binding axiom within the marginal-contribution
framework, and it is not obvious: the
dependence on the weights is nonlinear, the relevant measurement matrices are Cauchy
matrices, and floating-point computation falsely reports them rank-deficient already
at $n=6$ (Remark~\ref{rem:cauchy}).

The weighted Shapley value with positive weights $\lambda\in\R_{>0}^n$
\cite{Shapley53b,KalaiSamet} admits the dividend formula
\[
\varphi_i^{\lambda}(v)\;=\;\sum_{T\ni i}d_T\,\frac{\lambda_i}{\lambda(T)},
\qquad \lambda(T)=\sum_{j\in T}\lambda_j .
\]

\begin{theorem}\label{thm:weighted}
For every $n$, the joint information of the family
$\{\varphi^\lambda:\lambda\in\R_{>0}^n\}$ is all of $V_n^*$: dimension $2^n-1$.
\end{theorem}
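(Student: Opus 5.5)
The plan is to argue by duality in dividend coordinates. Since $v\mapsto d$ is a linear isomorphism (\S\ref{sec:prelim}), it suffices to show that the common kernel of all the functionals $\varphi_i^\lambda$ is zero. So suppose $d=(d_T)_{T\neq\varnothing}$ is annihilated by every $\varphi_i^\lambda$ with $\lambda\in\R_{>0}^n$. Dividing the dividend formula by $\lambda_i>0$, this says that for each fixed $i$
\[
F_i(\lambda)\;:=\;\sum_{T\ni i}\frac{d_T}{\lambda(T)}\;=\;0\qquad\text{for all }\lambda\in\R_{>0}^n .
\]
Every nonempty $T$ contains some $i$. It therefore suffices to prove, for a single $i$, that the rational functions $\lambda\mapsto 1/\lambda(T)$, $T\ni i$, are linearly independent on the open orthant. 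That forces $d_T=0$ for all $T$, so the kernel is trivial and the span has dimension $2^n-1$.

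The key step is this linear independence. I will first pass from the orthant to rational-function identities. Multiply $F_i$ by $\prod_{T\ni i}\lambda(T)$ to obtain a polynomial $P$ that vanishes on a nonempty open subset of $\R^n$. Such a polynomial is identically zero, so $F_i=0$ as an element of the field $\R(\lambda_1,\dots,\lambda_n)$. Next I will isolate one coefficient by a pole argument. The linear forms $\lambda(T)=\sum_{j\in T}\lambda_j$ have $0/1$ coefficient vectors $\mathbf 1_T$, and distinct nonempty $T$ give pairwise non-proportional forms, hence distinct hyperplanes $H_T=\{\lambda(T)=0\}$. Fix $T_0$ and pick a point $\lambda^*\in H_{T_0}$ lying on no other $H_T$; this is possible because finitely many distinct hyperplanes cannot cover $H_{T_0}$. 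Multiply the identity $F_i=0$ by $\lambda(T_0)$ and let $\lambda\to\lambda^*$. Every term with $T\neq T_0$ stays bounded near $\lambda^*$, so after the multiplication it tends to $0$, while the $T_0$ term is identically $d_{T_0}$. Hence $d_{T_0}=0$. Equivalently and algebraically: $\lambda(T_0)$ is an irreducible polynomial that divides $P=d_{T_0}\prod_{T\neq T_0}\lambda(T)+\lambda(T_0)(\cdots)$, but does not divide $\prod_{T\neq T_0}\lambda(T)$, which forces $d_{T_0}=0$.

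The main obstacle is only one of rigor: the hypothesis holds on the positive orthant, while the poles lie outside it. This is exactly why the polynomial-identity step is needed to transport the vanishing to all of $\R^n$, or to the polynomial ring, before evaluating near $H_{T_0}$. I will make that step explicit. I will also remark that the argument is purely structural and needs no explicit choice of weights. This sidesteps the ill-conditioned Cauchy-type matrices $[1/\lambda^{(r)}(T)]$ that a direct finite-sample rank certificate would require; those matrices are generically nonsingular precisely because of the same independence, but numerically fragile.
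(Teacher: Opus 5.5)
Your proof is correct. It uses the same underlying phenomenon as the paper (distinct poles force zero coefficients), but you run the pole argument in a different setting.

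\textbf{The paper's route.} The paper stays in one variable. For each $i$ it restricts to the ray $\lambda_i=1$, $\lambda_j=t\,2^{j}$ ($j\neq i$). There the coefficients become $1/(1+tW_A)$, with $W_A=\sum_{j\in A}2^{j}$ pairwise distinct by uniqueness of binary expansions. The term $T=\{i\}$ (where $W_\varnothing=0$) is removed by letting $t\to\infty$. The remaining terms are removed by residues at the distinct points $t=-1/W_A$.

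\textbf{Your route.} You work in $\R(\lambda_1,\dots,\lambda_n)$ with the hyperplane arrangement $\{H_T\}$. Clearing denominators transports the vanishing off the orthant. Each $d_{T_0}$ is then isolated at a generic point of $H_{T_0}$, or equivalently by primality of the linear form $\lambda(T_0)$. This treats every $T\ni i$ uniformly, including $T=\{i\}$, and needs no encoding trick.

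\textbf{What each buys.} The paper's route gives a sharper, effective statement. One explicit curve per player already carries full information, and the one-variable reduction is what yields the finite Cauchy-determinant certificate of Remark~\ref{rem:cauchy}. Your argument is the coordinate-free version of the same phenomenon. The binary ray is simply a line that meets the hyperplanes $H_T$ ($T\ni i$, $T\neq\{i\}$) in pairwise distinct points and misses $H_{\{i\}}$; that is why the paper needs its separate $t\to\infty$ step. Your version still implies, by the usual genericity argument on the evaluation determinant, that $2^{n-1}$ generic weight vectors suffice for each $i$, though it does not name them.

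One wording fix: ``for a single $i$'' should read ``for each fixed $i$.'' Independence for one player only kills the $d_T$ with $T\ni i$, and your covering remark needs the statement for every player.
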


\begin{proof}
It suffices to show, for each fixed $i$, that the span of the
dividend-coefficient vectors $c^{i,\lambda}$, where $c^{i,\lambda}_T=\lambda_i/\lambda(T)$
for $T\ni i$ and $0$ otherwise, contains every $e_T$ with $T\ni i$; the union over $i$
then spans all dividend coordinates $T\neq\varnothing$, and the M\"obius isomorphism
transports the claim to $V_n^*$.

Fix $i$ and restrict to the ray $\lambda_i=1$, $\lambda_j=t\,2^{\,j}$ for $j\neq i$,
$t>0$. For $T\ni i$ write $A=T\setminus i\subseteq\N\setminus i$; then
\[
c^{i,\lambda(t)}_T=\frac{1}{1+t\,W_A},\qquad W_A=\sum_{j\in A}2^{\,j},
\]
and the $W_A$ are pairwise distinct (binary expansions), with $W_\varnothing=0$. Suppose
a fixed functional $u=(u_A)$ annihilates the family along the ray:
$\sum_A u_A/(1+tW_A)\equiv0$ for all $t>0$. The left side is a rational function of $t$;
letting $t\to\infty$ kills every term with $W_A>0$, forcing $u_\varnothing=0$. The
remaining terms have simple poles at the distinct points $t=-1/W_A$; the residue at
$-1/W_A$ equals $u_A/W_A$, and a rational function vanishing identically has all residues
zero, so $u_A=0$ for every $A$. Hence no nonzero functional annihilates
$\{c^{i,\lambda}\}$, i.e.\ the span is full.
\end{proof}

\begin{remark}\label{rem:cauchy}
The coefficient matrix along the ray, $[1/(1+t_kW_A)]_{k,A}$, is a Cauchy matrix: its
classical determinant formula gives an alternative \emph{finite} proof (any
$2^{n-1}$ distinct ray points suffice for the $W_A>0$ block), and simultaneously explains
a numerical trap: Cauchy matrices are exponentially ill-conditioned, and floating-point
rank computations falsely report deficiency already at $n=6$ (float rank $57$ of $63$).
\textbf{exact}: rational-arithmetic Gaussian elimination certifies rank $63/63$ at $n=6$;
\textbf{numerical rank}: $15/15$ and $31/31$ at $n=4,5$.
\end{remark}

\begin{corollary}[the information ladder]\label{cor:ladder}
\[
\underbrace{\le n}_{\substack{\text{any single value}\\ \text{(weighted or not)}}}\ \subset\
\underbrace{n^2-n+1}_{\text{all semivalues}}\ \subset\
\underbrace{2^n-1}_{\substack{\text{all positive weighted}\\ \text{Shapley values}}}\ =\
\underbrace{2^n-1}_{\text{all probabilistic values}} .
\]
(For the last equality: Weber's probabilistic values~\cite{Weber} pay player $i$ an
arbitrary probability mixture of the marginals $v(S\cup i)-v(S)$ over coalitions
$S\subseteq\N\setminus\{i\}$; concentrating the mixture on a single $S$ makes the
value output that one marginal, and the marginals determine $v$ by telescoping along a
chain from $\varnothing$ to $\N$.)
The exponential gap closes at the \emph{family} level the moment anonymity is dropped:
within this framework, the blind spot is the price of equal treatment, and of nothing
else.
\end{corollary}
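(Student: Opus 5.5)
The corollary makes four claims, and I will verify them one at a time. First, any single value has information $\le n$, since it is $n$ functionals $\phi_1,\dots,\phi_n$. Second, the semivalue family has information $n^2-n+1$, which is Theorem~\ref{thm:dimension}. Third, the positive weighted Shapley family has information $2^n-1$, which is Theorem~\ref{thm:weighted}. Fourth, the probabilistic values also reach $2^n-1$. The strict inclusions then follow by comparing dimensions. We have $n<n^2-n+1$ for $n\ge2$. We also have $n^2-n+1<2^n-1$, that is $\dim\mathcal N_n>0$, exactly when $n\ge4$. So I will state the strictness for $n\ge4$ and read ``$\subset$'' as non-strict below that, where Theorem~\ref{thm:dimension} gives equality at $n=3$.

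There is a gap between the displayed statement and what the inclusion symbols assert. Dimensions alone do not give nesting of subspaces, so I will also check the inclusions of spans directly.

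The semivalue span lies inside the full dual trivially, since any subspace of $V_n^*$ does. For the first inclusion there is a subtlety. A single semivalue's $n$ functionals $\psi^p_i$ do lie in the semivalue span. But a non-anonymous single value, such as one weighted Shapley value, need not lie in that span. So I will phrase the first rung as the dimension statement ``$\le n$'' under the braces. The nesting I actually assert is that of the last three spans, which holds because each is contained in $V_n^*=2^n-1$.

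The only new content is the probabilistic-value equality. A probabilistic value pays player $i$ the amount $\sum_{S\not\ni i}q^i_S\,[v(S\cup i)-v(S)]$, where $q^i$ is any probability vector on the subsets of $\N\setminus i$. Taking $q^i$ to be the point mass at $S$ gives the functional $v\mapsto v(S\cup i)-v(S)$ for every pair $S\not\ni i$. Their span contains each $v(S)$ by telescoping. Order $S=\{j_1,\dots,j_s\}$ and use the chain $\varnothing\subset\{j_1\}\subset\cdots\subset S$; then
\[
v(S)=\sum_{r=1}^{s}\bigl[v(\{j_1,\dots,j_r\})-v(\{j_1,\dots,j_{r-1}\})\bigr],
\]
using $v(\varnothing)=0$. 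The evaluation functionals $v\mapsto v(S)$ for $S\ne\varnothing$ form a basis of $V_n^*$, so the span is everything.

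The concluding sentence of the corollary is interpretive. Its precise content is that the weighted Shapley values satisfy every Dubey--Neyman--Weber axiom except anonymity, since they are efficient and therefore dummy-respecting with nonnegative weights, and positive. The jump from $n^2-n+1$ to $2^n-1$ is then attributable to dropping anonymity within the marginal framework.

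No step is a real obstacle here. The only care point is the one already noted: interpreting the inclusion symbols honestly as nesting for the last three terms, with only a dimension bound for the first.
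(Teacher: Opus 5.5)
Your proposal is correct and follows the paper's own argument: the two middle rungs are Theorems~\ref{thm:dimension} and~\ref{thm:weighted}, and the last equality is exactly the paper's point-mass-plus-telescoping argument for Weber's probabilistic values. Your caveats are accurate refinements of the displayed chain: strictness holds only for $n\ge4$, and the first rung is a dimension bound rather than a subspace inclusion. One small slip in your interpretive paragraph: efficiency does not imply the dummy property. Weighted Shapley values satisfy dummy because a dummy player $i$ has $d_T=0$ for every $T\ni i$ with $|T|\ge2$, so the dividend formula returns $d_{\{i\}}=v(\{i\})$.
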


\section{Manipulation thresholds}\label{sec:manipulation}

This section assembles the preceding structure theory into operational statements,
and every ingredient built so far is used: the kernel and its dimension
(Theorem~\ref{thm:dimension}) supply the manipulation space, the price list
(Theorem~\ref{thm:hierarchy}) supplies the audit thresholds, the convex fiber
(Theorem~\ref{thm:convexfiber}) shows the manipulation survives feasibility
constraints, and an embedding lemma localizes all of it to the manipulating
coalition. The same fingerprints reappear once more as swing tables in
\S\ref{sec:parliaments}, where the manipulation-space geometry becomes visible in an
exhaustive census.

Model a coalition $C\subseteq\N$, $|C|=c$, that restructures only its \emph{internal
dividend structure} (singleton and higher-order terms alike): it replaces $v$ by $v+w$
where the dividends of $w$ are supported on $2^C\setminus\{\varnothing\}$,
equivalently $w(S)=w(S\cap C)$ for all $S$. (Invisibility itself forces the singleton
part to vanish, since $L_i(1)=d_{\{i\}}$ is visible; the substantive freedom is in
arranged synergy of order $\ge2$, per Proposition~\ref{prop:kernelbasis}.)

\begin{lemma}\label{lem:embedding}
For such $w$, viewed as an $n$-player game: $L_i(t)(w)=0$ for $i\notin C$, and for
$i\in C$, $L_i(t)(w)$ equals the corresponding $c$-player loading of $w|_{2^C}$. Hence
$w\in\mathcal N_n$ iff $w|_{2^C}\in\mathcal N_c$, and more generally the order-$\le d$
functionals of the $n$-player game restrict to those of the $c$-player game.
\end{lemma}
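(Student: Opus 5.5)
The whole argument lives in dividend coordinates. The hypothesis $w(S)=w(S\cap C)$ says that the dividends $d_T(w)$ vanish unless $\varnothing\neq T\subseteq C$. The first step is to check this directly. For $T\not\subseteq C$, pick $j\in T\setminus C$ and pair each $R\subseteq T$ with $R\triangle\{j\}$ in the M\"obius sum $d_T=\sum_{R\subseteq T}(-1)^{|T|-|R|}w(R)$. Since $w(R)=w(R\cap C)$ is unchanged by toggling $j$, the paired terms cancel. Moreover, for $T\subseteq C$ the M\"obius sum involves only subsets of $C$, so $d_T(w)$ coincides with the $c$-player dividend of $w|_{2^C}$.

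The loading claims then follow by reading off the definition. For $i\notin C$, every $T\ni i$ fails $T\subseteq C$, so each term of $L_i(t)(w)=\sum_{|T|=t,\,T\ni i}d_T$ is zero. For $i\in C$, the only surviving terms are the $T\subseteq C$ with $i\in T$ and $|T|=t$. This is exactly the $c$-player loading of $w|_{2^C}$ at size $t$, which is zero when $t>c$ on both sides. Theorem~\ref{thm:harsanyi} characterizes the kernel as the simultaneous vanishing of all $L_i(t)$, so applying it once with $n$ players and once with $c$ players gives $w\in\mathcal N_n$ iff $w|_{2^C}\in\mathcal N_c$.

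For the order-$\le d$ statement I would use Lemma~\ref{lem:triangular}. By that lemma the span of $\{A_T(k)\}$ equals the span of the $E_T(m)=\sum_{|R|=m,\,R\cap T=\varnothing}d_{T\cup R}$, so it suffices to compare the $E_T(m)$. If $T\not\subseteq C$, every $T\cup R$ leaves $C$ and $E_T(m)(w)=0$. If $T\subseteq C$, only $R\subseteq C\setminus T$ contribute, and $E_T(m)(w)$ is the $c$-player $E_T(m)$ of $w|_{2^C}$. Hence $w$ is annihilated by all $n$-player order-$\le d$ functionals iff $w|_{2^C}$ is annihilated by all $c$-player ones. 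The same holds for the $A_T(k)$ themselves, using the unitriangular relation in each player count.

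The main obstacle is interpretive rather than technical. "Restrict to" has to be read as this equality of annihilation conditions (equivalently, equality of the $E$-coordinates), not as literal equality of the $A_T(k)$ values. Literal equality fails: by Lemma~\ref{lem:triangular} the binomial coefficients $\binom{n-t-m}{k-m}$ depend on $n$. I would state this explicitly and work with the $E_T(m)$ form throughout, which is invariant.
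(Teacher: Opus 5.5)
Your proposal is correct and follows the paper's route: first M\"obius locality of the dividends (your toggle-$j$ involution makes explicit the paper's telescoping remark), then a term-by-term reading of the loadings and of the order-$\le d$ functionals. Routing the last step through the $E_T(m)$ of Lemma~\ref{lem:triangular} instead of pointwise through Lemma~\ref{lem:diffdiv} is only a minor variation, and your caveat is accurate: only the annihilation conditions (equivalently, the spans on the restructuring subspace) coincide, not the literal $A_T(k)$ values, and that is exactly what Theorem~\ref{thm:manipulation}(b) uses.
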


\begin{proof}
M\"obius locality: since $w(S)=w(S\cap C)$, every dividend $d_T$ with
$T\not\subseteq C$ vanishes (compute $d_T$ by M\"obius over a chain leaving $C$:
the alternating sum telescopes to $0$), and dividends with $T\subseteq C$ agree with the
$c$-player dividends. The loading identities follow term by term, as does the
order-$\le d$ statement via Lemma~\ref{lem:diffdiv}.
\end{proof}

\begin{theorem}\label{thm:manipulation}
\textup{(a)} The space of internal restructurings of $C$ invisible to every semivalue
payment has dimension $2^c-c^2+c-2$; it is $0$ for $c\le3$ and positive from $c=4$:
\emph{three players cannot hide, four can}.
\textup{(b)} Against a payment scheme using all interaction functionals of order
$\le d$, an invisible internal restructuring of $C$ exists if and only if
$c\ge2d+2$. Thus degree-$d$ schemes are immune to coalitions of size $\le2d+1$ and
manipulable from $2d+2$: each additional interaction order raises the immunity
threshold by exactly two coalition sizes.
\end{theorem}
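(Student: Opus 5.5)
Both parts reduce, through Lemma~\ref{lem:embedding}, to a statement about $c$-player games. An internal restructuring $w$ of $C$ is the same thing as a $c$-player game $w|_{2^C}$: the map is a linear isomorphism from the space of internal restructurings onto $V_c$, since dividends outside $2^C$ vanish and dividends inside agree. The lemma says $w$ is invisible to every semivalue (respectively to every order-$\le d$ audit functional) of the $n$-player game if and only if $w|_{2^C}$ is invisible to the corresponding $c$-player family. For the audit family I will check this explicitly. By Lemma~\ref{lem:triangular}, the $n$-player functionals $A_T(k)$ with $|T|\le d$ span the pairings $E_T(m)$. If $T\not\subseteq C$, every dividend in $E_T(m)(w)$ vanishes. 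If $T\subseteq C$, then $E_T(m)(w)$ is exactly the $c$-player pairing $E_T(m)$ applied to $w|_{2^C}$, because the nonzero terms have $R\subseteq C\setminus T$. So the invisible restructurings are the common kernel of the $c$-player family in each case.

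For (a), Theorem~\ref{thm:dimension} applied with $c$ players gives kernel dimension $2^c-c^2+c-2$. This vanishes for $c\le3$ (values $0,0,0$ at $c=1,2,3$; for $c=1,2$ this is immediate, or follows from Lemma~\ref{lem:incidence} saturating each slice). It is positive from $c=4$, since $2^c$ outgrows $c^2$; concretely, the value is $2$ at $c=4$ and the sequence increases after that. Example~\ref{ex:gabcd} supplies an explicit witness at $c=4$.

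For (b), the $c$-player kernel of the order-$\le d$ family is zero exactly when the span in Theorem~\ref{thm:hierarchy} (with $n$ replaced by $c$) is all of $V_c^*$, i.e.\ when $\binom{c}{\min(d,s,c-s)}=\binom cs$ for every $1\le s\le c$. By Corollary~\ref{cor:rungs} this holds iff $d\ge\lfloor c/2\rfloor$. I will make the comparison slice by slice. Since $\binom{c}{j}$ is strictly increasing in $j$ for $j\le c/2$, equality on slice $s$ holds iff $d\ge\min(s,c-s)$. The maximum of $\min(s,c-s)$ is $\lfloor c/2\rfloor$, so full information holds iff $\lfloor c/2\rfloor\le d$, which is equivalent to $c\le 2d+1$. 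Conversely, if $c\ge2d+2$, the middle slice $s=\lfloor c/2\rfloor\ge d+1$ has deficit $\binom cs-\binom cd>0$, so a nonzero invisible dividend vector exists on that slice. Via the isomorphism, that vector yields the required internal restructuring. The ``raises by two'' phrasing then just restates that the threshold is $2d+2$.

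The main obstacle is the localization step in (b). One must be sure that audit functionals with $T$ meeting both $C$ and its complement, or with $S$ ranging outside $C$, impose no constraint beyond the $c$-player ones. The explicit $E_T(m)$ computation above settles this. The rest is the monotonicity of binomials below the middle.
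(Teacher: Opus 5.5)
Your proposal is correct and follows essentially the paper's route. You reduce via Lemma~\ref{lem:embedding} to the $c$-player kernel, then apply Theorem~\ref{thm:dimension} for (a) and the slice-dimension formula of Theorem~\ref{thm:hierarchy} for (b). Your explicit $E_T(m)$ localization check, the strict monotonicity of $\binom cj$ for $j\le c/2$, and the middle-slice witness spell out steps that the paper's proof leaves implicit.
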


\begin{proof}
(a) By Lemma~\ref{lem:embedding} the space in question is isomorphic to $\mathcal N_c$;
apply Theorem~\ref{thm:dimension}.
(b) By Lemma~\ref{lem:embedding} and Theorem~\ref{thm:hierarchy}, the order-$\le d$
functionals see, of the $c$-player game, a subspace of dimension
$\sum_{s=1}^{c}\binom c{\min(d,s,c-s)}$, which equals $2^c-1$ iff
$\min(s,c-s)\le d$ for every $1\le s\le c$, i.e.\ iff $\lfloor c/2\rfloor\le d$, i.e.\
$c\le2d+1$. (\textbf{numerical rank}: boundary verified for $d=1,2,3$; e.g.\ $d=3$: $c=7$ fully
visible $127/127$, $c=8$ deficient $241/255$.)
\end{proof}

\begin{remark}
In a data market or surplus-division setting paying by any semivalue, (a) says a
four-member coalition can re-arrange who actually generates value, shifting synergy
between its members along $\mathcal N_4$-directions as in Example~\ref{ex:gabcd},
with no member's payment moving. (b) prices the audit: pairwise-interaction audits stop
coalitions of five but not six; order-$3$ audits stop seven but not eight. By Corollary~\ref{cor:smallball} the
restructuring can preserve convexity throughout, at every audit order, so feasibility
constraints of the economic environment do not, by themselves, protect the payments.
\end{remark}

\section{Visibility of canonical examples}\label{sec:empirics}

Theorem~\ref{thm:harsanyi} makes a measurement possible.

\begin{definition}[visibility]\label{def:visibility}
Equip $\R^{2^n}$ (coalition-value coordinates, $v(\varnothing)=0$) with the standard
Euclidean inner product, and let $P_{\mathrm{vis}}$ be the orthogonal projector onto
$\mathcal N_n^{\perp}$, the span of the semivalue functionals regarded as vectors.
The \emph{visibility} of a nonzero game is
$\|P_{\mathrm{vis}}v\|^2/\|v\|^2\in[0,1]$.
\end{definition}

Visibility $1$ means the semivalue family describes the game completely. For an
isotropic Gaussian game the expected visibility is exactly
$\operatorname{tr}P_{\mathrm{vis}}/(2^n-1)=(n^2-n+1)/(2^n-1)$, the dimension ratio,
which is the generic baseline. Because the M\"obius transform is not orthogonal, the
analogous projection in Harsanyi-dividend coordinates (the per-slice incidence row
space applied to the dividend vector) is a different statistic; we report it as a
robustness check.

\begin{proposition}[symmetric games are fully visible; \textbf{proved}]\label{prop:symmetric}
If $v(S)$ depends only on $|S|$, then $v$ has visibility $1$.
\end{proposition}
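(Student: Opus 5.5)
Visibility $1$ means $P_{\mathrm{vis}}v=v$, that is, $v\in\mathcal N_n^{\perp}$. Since $\mathcal N_n^{\perp}$ is, by definition, the span of the semivalue functionals regarded as vectors in coalition-value coordinates, it suffices to show that $v$ lies in that span. Equivalently, we must show $v$ is Euclidean-orthogonal to every blind game. We use the second formulation, because the slice-total facts from \S\ref{sec:cones} apply directly.

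Write $v(S)=f(|S|)$ and let $u\in\mathcal N_n$ be arbitrary. Then
\[
\langle v,u\rangle=\sum_{S\neq\varnothing}f(|S|)\,u(S)=\sum_{k=1}^{n}f(k)\,B(k)(u),
\]
where $B(k)(u)=\sum_{|S|=k}u(S)$ is the slice total. The opening paragraph of \S\ref{sec:cones} shows that every blind game has $B(k)=0$ for all $k$. That argument combines $u(\{i\})=m_i(0)(u)=0$ with the identity $\sum_i m_i(k-1)=k\,B(k)-(n-k+1)\,B(k-1)$ and an induction on $k$. Hence $\langle v,u\rangle=0$ for every $u\in\mathcal N_n$, so $v\in\mathcal N_n^{\perp}$ and $\|P_{\mathrm{vis}}v\|^2=\|v\|^2$. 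The visibility is therefore $1$ whenever $v\neq 0$.

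The only step needing care is the fingerprint identity behind $B(k)=0$, since the rest is bookkeeping. I would check it directly. In $\sum_i\sum_{|S|=k-1,\,S\not\ni i}[v(S\cup i)-v(S)]$, each $k$-set $U$ appears once for each choice of $i\in U$, contributing $k\,B(k)$. Each $(k-1)$-set appears once for each of the $n-k+1$ players outside it, contributing $-(n-k+1)B(k-1)$. The induction then starts from $B(0)=v(\varnothing)=0$ and $B(1)=\sum_i m_i(0)=0$. For a blind game all $m_i(k-1)$ vanish, so $k\,B(k)=(n-k+1)B(k-1)=0$ at every step.

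As an alternative, we could exhibit $v$ explicitly in the span. The vector $\mathbf 1_{|S|=k}$ equals $\frac1k\sum_i$ of the indicator of $\{S:|S|=k,\ i\in S\}$. Writing each such indicator as a combination of the fingerprint vectors $m_i(j)$ is more cumbersome than the orthogonality route, so the orthogonality argument is the one I would present.
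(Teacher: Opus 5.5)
Your proof is correct and follows the paper's argument: you pair the symmetric game with an arbitrary blind game, reduce the inner product to $\sum_k f(k)\,B(k)(u)$, and invoke the vanishing of slice totals from \S\ref{sec:cones}. Your direct check of the identity $\sum_i m_i(k-1)=k\,B(k)-(n-k+1)\,B(k-1)$ and of the induction is a useful addition, since the paper only states it.
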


\begin{proof}
Every blind game $w$ has zero slice totals, $B(k)(w)=\sum_{|S|=k}w(S)=0$ for all $k$
(\S\ref{sec:cones}). Hence for $u(S)=a_{|S|}$,
$\langle u,w\rangle=\sum_k a_k B(k)(w)=0$: $u$ is orthogonal to $\mathcal N_n$ in
the metric of Definition~\ref{def:visibility}, so its visibility is $1$. (The
dividend-metric value $1.000$ in the table holds too: the dividends of $u$ are
constant on slices, hence in the degree-$0$ subspace of each slice.)
\end{proof}

\noindent\textbf{Measurement} (\textbf{rank}; $n=10$, exact enumeration of all $1024$
coalitions; $100$ instances per stochastic family, $200$ random games, fixed seeds;
mean$\pm$sd; generators specified in Appendix~\ref{app:verification}. The
generators are stylized representatives of each family; the table illustrates the
theory's discriminating power rather than establishing a historical claim):

\begin{center}
\begin{tabular}{lcc}
\toprule
game & visibility (coalition) & visibility (dividend)\\
\midrule
symmetric convex $|S|^2$ & $1.000$ (Prop.~\ref{prop:symmetric}) & $1.000$\\
airport cost & $0.996\pm0.003$ & $0.919\pm0.060$\\
spanning-tree cost & $0.995\pm0.002$ & $0.777\pm0.084$\\
bankruptcy (O'Neill) & $0.982\pm0.009$ & $0.536\pm0.154$\\
glove market ($5|5$) & $0.946$ & $0.974$\\
weighted majority & $0.904\pm0.021$ & $0.639\pm0.200$\\
unanimity of a $4$-set & $0.591$ & $0.048$\\
random game & $0.089\pm0.013$ & $0.640\pm0.112$\\
\bottomrule
\end{tabular}
\end{center}

Three readings. First, the classical canon sits at $0.90$ to $1.00$ against a $0.089$
generic baseline, and the random-game measurement reproduces the dimension ratio
$91/1023=0.0890$ to three decimals: the games economics studied store their content in
totals (symmetry, near-symmetry, additivity plus mild curvature), which is exactly what
Theorem~\ref{thm:harsanyi} says semivalues read. Bankruptcy sits at $0.98$: the
O'Neill game concentrates nearly all of its content in totals, consistent with the
tight agreement of solution concepts observed on this class \cite{ONeill,AumannMaschler}.
Second, the gradient within the table is the theory speaking: visibility falls as
\emph{arranged} synergy rises, and the unanimity game, whose entire content is one
specific coalition's identity, is the canon's worst performer at $0.59$; in dividend
coordinates it is $95\%$ invisible. The dividend column also delimits the claim: the
canon remains high under both metrics, but the generic baseline is a coalition-metric
statement (a random game reads $0.64$ in dividend coordinates), so comparisons to the
baseline are made in the coalition metric throughout. Third, the prospective reading:
modern applications (team formation, data markets, feature and component attribution)
are arrangement-driven by their nature, native to the region where visibility collapses
toward the dimension ratio, and no output of the family can flag the deficit, since the
missing component contributes zero to every output (Theorem~\ref{thm:harsanyi}). The
table suggests the classical canon is selected, not representative.

\section{Parliaments: identical power, different politics}\label{sec:parliaments}

A \emph{simple game} is monotone $v\colon2^{\N}\to\{0,1\}$ with
$v(\varnothing)=0$, $v(\N)=1$; it is \emph{proper} if no winning coalition has a
winning complement, and \emph{weighted} if representable by a quota and nonnegative
weights (see Taylor--Zwicker~\cite{TaylorZwicker}). Its \emph{swing table} is the matrix
$T_{i,k}=\#\{S\not\ni i:\,|S|=k,\ v(S\cup i)=1,\ v(S)=0\}$, precisely the fingerprint
$\{m_i(k)\}$, hence by Theorem~\ref{thm:harsanyi} the complete information any semivalue
power index (Shapley--Shubik, Banzhaf, all of them) can use.

\begin{example}[\textbf{exhaustive}]\label{ex:parliaments}
We enumerate two universes at $n=5$ (labeled games throughout). \emph{All nontrivial
monotone simple games}: $7{,}579$ (the fifth Dedekind number $7{,}581$ minus the two
constants); swing-table fibers contain $907$ collision classes ($2{,}524$ games).
\emph{Proper} simple games (no winning coalition has a winning complement): $2{,}645$,
with $321$ collision fibers containing $962$ games. A displayable pair, both
\emph{proper}, both \emph{non-weighted}, and \emph{non-isomorphic} (verified against
all $120$ player permutations), is given by the minimal winning coalitions
\[
\{123,\,024,\,034,\,134\}
\qquad\text{versus}\qquad
\{012,\,034,\,134,\,234\},
\]
with common swing table $T_{i,k}$ having rows $(0,0,2,2,0)$ for players $0,1,2$ and
$(0,0,3,3,0)$ for players $3,4$: identical values for every player under \emph{every}
semivalue power index, structurally different politics.
\end{example}

\begin{conjecture}[swing rigidity of weighted games]\label{conj:rigidity}
A weighted voting game is determined by its swing table among \emph{all} simple games.
Evidence (\textbf{exhaustive} at $n=5$, in both universes): of the $2{,}524$ monotone
games inside collision classes, \emph{zero} are weighted, although $3{,}285$ of
$7{,}579$ are; restricting to proper games, zero of the $962$ colliding games are
weighted although $1{,}683$ of $2{,}645$ are (weightedness decided by two independent
methods agreeing on every one of the $7{,}579$ games: LP feasibility, and the
swap-robustness test, necessary for weightedness \cite{TZ1992,TaylorZwicker};
Appendix~\ref{app:verification});
and (\textbf{search}) zero collisions occur
within the weighted class across $557$ games at $n=6$ (weights $\le9$) and $4{,}214$
games at $n=7$ (weights $\le8$). The blind-spot phenomenon thus appears to be a property
of general coalition structure that weightedness destroys, in this census and search
regime; a proof or a
counterexample at larger $n$ (search protocol in Appendix~\ref{app:gpu}) would each be of
independent interest.
\end{conjecture}

\section{Concluding remarks}\label{sec:discussion}

This paper asked what the family of all semivalues can and cannot learn about a
cooperative game, and answered exactly. A semivalue observes, for each player and each
coalition size, a single number: that player's total synergy at that size
(Theorem~\ref{thm:harsanyi}). That is $n^2-n+1$ numbers, out of the $2^n-1$ needed to
specify a game (Theorem~\ref{thm:dimension}); two games that distribute the same
synergy totals among different partners receive identical payments from every
semivalue at once. The invisible difference cannot constitute an entire well-behaved
economy, since a nonzero invisible game fails superadditivity, monotonicity, and core
existence (Theorem~\ref{thm:cones}); but it can be added to a well-behaved economy
without detection, since a strictly convex game plus a small invisible perturbation is
still convex, genuinely different, and paid identically (Theorem~\ref{thm:convexfiber}).
Interaction indices repair the deficit order by order, with the recovered dimension
known exactly at each order and full recovery at order $\lfloor n/2\rfloor$
(Theorem~\ref{thm:hierarchy}). Dropping anonymity repairs it completely: the weighted
Shapley values jointly determine the whole game (Theorem~\ref{thm:weighted}). A
coalition of $c$ players can exploit it, rearranging its internal synergies without
any payment moving, precisely when $c\ge2d+2$ against order-$d$ auditing
(Theorem~\ref{thm:manipulation}). The measurements of \S\ref{sec:empirics} and the
census of \S\ref{sec:parliaments} show where this matters: barely at all in the
classical canon, whose games keep their content in totals, and generically everywhere
else.

\emph{Axiomatically}, the Dubey--Neyman--Weber axioms have an exact information
price, and Corollary~\ref{cor:ladder} shows the entire price is paid by anonymity;
the hierarchy theorem prices every partial refund.

\emph{Economically}, the results are an identification statement of the standard
econometric kind: allocation data, even from the whole semivalue family, identifies
the coalitional production function only up to a fiber of dimension $2^n-n^2+n-2$, and
this failure persists under the strongest standard regularity assumptions and is
exercisable by coalitions as manipulation. Anyone estimating complementarities from
payment data is estimating a projection.

\emph{Practically}, the visibility table offers a structural explanation of the
practitioner's puzzle: the
Shapley value's sixty-year record reflects a canon of high-visibility games rather
than robustness of the methodology, and the deficit is silent by construction, since
the missing component contributes zero to every output. In arrangement-driven
applications (team formation, data markets, feature and component attribution) the
seen fraction collapses toward the dimension ratio $\Theta(n^2/2^n)$, and
Example~\ref{ex:gabcd} is a constructive recipe for behavior invisible to the entire
methodology; companion work in preparation develops the machine-learning
consequences.

Three problems are left open: prove or refute the swing-rigidity conjecture
(Conjecture~\ref{conj:rigidity}); extend the information ladder to values with
partial symmetry, such as coalition-structure values; and formulate the non-atomic
limit, where the diagonal formula for the Aumann--Shapley
value~\cite{AumannShapley} suggests the visible space becomes a diagonal-trace
algebra of the dividend measure hierarchy.

\begin{appendices}

\section{Verification index}\label{app:verification}
All computations are in \texttt{flagship\_suite.py} and the exact-arithmetic follow-ups:
(1) span equality fingerprint $=$ dividend loadings ($n=5$: ranks $21/21/21$);
(2) hierarchy span identity ($n=6$, $d=2$: $58/58/58$) and the dimension table
($d\le4$, $n\le12$), with the fifteen $n\le9$ cases certified \textbf{exact}ly over two
$31$-bit prime fields (\texttt{s7b\_exact\_ranks.py}); (3) cone sanity ($50/50$); (4) weighted family ranks
(\textbf{exact} $63/63$ at $n=6$; \textbf{rank} $15/15$, $31/31$ at $n=4,5$);
(5) manipulation boundary table ($d\le3$); (6) Dedekind enumeration at $n=5$ in both universes (monotone: $7{,}579$; proper:
$2{,}645$), reproduced independently by a second implementation
(\texttt{census\_exact.py}); weightedness decided by LP feasibility and, independently,
by swap robustness (an integer test, necessary for weightedness
\cite{TZ1992,TaylorZwicker}), the two methods agreeing on every game and on every
count, including zero weighted games in any collision fiber; non-isomorphism of the displayed pair checked against all $120$
permutations; weighted searches at $n=6,7$;
(7) \textbf{exact} rational ranks for the fingerprint matrices at $n=4,5$ and for the
weighted family at $n=6$; the convex-fiber thresholds $(2,3,2)$ in rational arithmetic;
(8) four-cycle spanning of every slice kernel at $n=5,6$;
(9) the visibility measurements of \S\ref{sec:empirics}
(\texttt{visibility\_sweep.py}): exact enumeration at $n=10$, $100$ instances per
stochastic family ($200$ random games), both metrics of Definition~\ref{def:visibility}.
Generators: bankruptcy $v(S)=\max(0,E-c(\N\setminus S))$ with claims $c_i\sim U(0,1)$
and estate $E=\tfrac12\sum_ic_i$; airport $v(S)=\max_{i\in S}c_i$, $c_i\sim U(0,1)$;
spanning-tree cost of $S$ plus a source, eleven uniform points in the unit square with
Euclidean costs (Prim); weighted majority with $U(0,1)$ weights and half-total quota;
glove $v(S)=\min(|S\cap L|,|S\cap R|)$ with $|L|=|R|=5$; random games i.i.d.\
standard normal. Internal asserts: the M\"obius transform round-trips, the fingerprint
vectors span exactly the projector's range, the dividend visibility of the unanimity
game equals its closed form $10/210$, and the random-game mean matches the dimension
ratio. Extending to $n=14$ and further families (market, assignment, flow games) is
routine under the same protocol.

\section{Search protocol for Conjecture~\ref{conj:rigidity} (supplementary)}\label{app:gpu}
\texttt{rigidity\_search.py} (companion file): enumerate weighted games at $n=8$
(weights $\le9$, $\sim$$4\times10^5$ candidates after dedup) and $n=9$ in chunks;
hash swing tables; report any cross-win-set collision. A second mode samples random
monotone games at $n=6,7$ by monotone closure and tests whether any \emph{weighted} game
ever collides with a sampled simple game (the conjecture's full strength). Any collision
falsifies; sustained absence at $n=8,9$ justifies attacking a proof via the
LP-duality structure of weightedness.

\end{appendices}

\bmhead{Declarations}
\textbf{Use of AI.} A large language model (Claude, Anthropic) was used as an
assistant in preparing this paper: for editing and restructuring prose, for
adversarial review of drafts, and for writing and executing portions of the
verification code indexed in Appendix~\ref{app:verification}. All definitions,
theorems, and proofs were formulated, checked, and are vouched for by the author,
who takes full responsibility for the content; all computational claims were
independently reproduced.
\textbf{Competing interests.} The author declares no competing interests.
\textbf{Code availability.} All code reproducing the computational claims is
available from the author.

\end{document}